\newif\ifcomments       %
\newif\ifpreliminary    %
\newif\ifwatermark      %
\newif\ifexabs          %
\newif\ifanon           %
\newif\iflipics
\newif\ifintroonly
\newif\ifitcsproc
  \newtheorem{theorem}{Theorem}
  \newtheorem{theorem}{Theorem}[section]
\newtheorem{definition}[theorem]{Definition}
\newtheorem{lemma}[theorem]{Lemma}
\newtheorem{corollary}[theorem]{Corollary}
\newtheorem{fact}[theorem]{Fact}
  \setlist[description]{noitemsep}
  \setlist[enumerate]{noitemsep}
  \setlist[itemize]{noitemsep}
\NewDocumentCommand{\whiten}{ m }
{
  \int_step_function:nnnN {1}{1}{#1} \white_text:n
}
\NewDocumentCommand{ \varul }{ D<>{5} O{0.2ex} O{0.1ex} +m } {%
  \begingroup
  \setul{#2}{#3}%
  \def\SOUL@uleverysyllable{%
    \setbox0=\hbox{\the\SOUL@syllable}%
    \ifdim\dp0>\z@
    \SOUL@ulunderline{\phantom{\the\SOUL@syllable}}%
    \whiten{#1}%
    \llap{%
      \the\SOUL@syllable
      \SOUL@setkern\SOUL@charkern
    }%
    \else
    \SOUL@ulunderline{%
      \the\SOUL@syllable
      \SOUL@setkern\SOUL@charkern
    }%
    \fi}%
  \ul{#4}%
  \endgroup
}
\newcommand{\E}{\mathop{\mathbb{E}}}
\newcommand{\paren}[1]{\left(#1\right)}
\newcommand{\mparen}[1]{\mleft(#1\mright)}
\newcommand{\bracket}[1]{\left[#1\right]}
\newcommand{\mbracket}[1]{\mleft[#1\mright]}
\newcommand{\abs}[1]{\left|#1\right|}
\newcommand{\norm}[1]{\left\lVert#1\right\rVert}
\newcommand{\IP}[1]{\left\langle#1\right\rangle}
\DeclareMathOperator*{\TraceDist}{TD}
\DeclareMathOperator*{\Tr}{Tr}
\newcommand{\ketbra}[2]{\ket{#1}\!\bra{#2}}
\newcommand{\cPpoly}{\mathsf{P/poly}}
\newcommand{\cBPP}{\mathsf{BPP}}
\newcommand{\cBQP}{\mathsf{BQP}}
\newcommand{\cBQPqpoly}{\mathsf{BQP/qpoly}}
\newcommand{\cQMA}{\mathsf{QMA}}
\newcommand{\cQSZK}{\mathsf{QSZK}}
\newcommand{\cQSZKHV}{\mathsf{QSZK_{HV}}}
\newcommand{\cQCZK}{\mathsf{QCZK}}
\newcommand{\cQCZKHV}{\mathsf{QCZK_{HV}}}
\newcommand{\cCZK}{\mathsf{CZK}}
\newcommand{\cQIP}{\mathsf{QIP}}
\newcommand{\cIP}{\mathsf{IP}}
\newcommand{\cP}{\mathsf{P}}
\newcommand{\cNP}{\mathsf{NP}}
\newcommand{\cPSPACE}{\mathsf{PSPACE}}
\renewcommand{\paragraph}[2]{\smallskip\noindent{\bf #2\ \ }}
  \newcommand{\luowen}[1]{\textcolor{blue}{Luowen: #1}}
  \newcommand{\ran}[1]{\textcolor{purple}{Ran: #1}}
  \newcommand{\znote}[1]{\textcolor{red}{{\large [} \textbf{Z:} #1 {\large]}}}
  \newcommand{\luowen}[1]{}
  \newcommand{\ran}[1]{}
  \newcommand{\znote}[1]{}
\newcommand{\ignore}[1]{}
\newcommand{\acks}{
  LQ thanks Prabhanjan Ananth and Henry Yuen for the motivating discussions prior to this work.
  LQ is also grateful to Scott Aaronson and William Kretschmer for their insightful discussions for \Cref{sec:discussions}.
  Some of these discussions occurred while LQ was visiting the Simons Institute for the Theory of Computing.
  We thank Fermi Ma for his helpful comments on a preliminary version of this work.
}
\title{On the computational hardness needed for quantum cryptography
}
  \author{Zvika Brakerski}{Weizmann Institute of Science, Rehovot, Israel \and \url{https://zvikab.bitbucket.io}}{zvika.brakerski@weizmann.ac.il}{https://orcid.org/0000-0002-4867-7999}{Supported by the Israel Science Foundation (Grant No.\ 3426/21), and by the European Union Horizon 2020 Research and Innovation Program via ERC Project REACT (Grant 756482).}
  \author{Ran Canetti}{Boston University, Boston, US \and \url{https://www.bu.edu/cs/profiles/ran-canetti/}}{canetti@bu.edu}{https://orcid.org/0000-0002-5479-7540}{}
  \author{Luowen Qian}{Boston University, Boston, US \and \url{https://cs-people.bu.edu/luowenq/}}{luowenq@bu.edu}{https://orcid.org/0000-0002-1112-8822}{}
  \authorrunning{Z.\ Brakerski, R.\ Canetti, and L.\ Qian}
  \keywords{quantum cryptography, efi, commitment scheme,  oblivious transfer,  zero knowledge,  secure multiparty computation}
  \newcommand{\email}[1]{\href{mailto:#1}{\texttt{#1}}}
  \author{
    \ifanon\else
      Zvika Brakerski\thanks{Weizmann Institute of Science, Israel,  \email{zvika.brakerski@weizmann.ac.il}. Supported by the Israel Science Foundation (Grant No.\ 3426/21), and by the European Union Horizon 2020 Research and Innovation Program via ERC Project REACT (Grant 756482).}
      \and
      Ran Canetti\thanks{Boston University, US. \email{canetti@bu.edu}, \email{luowenq@bu.edu}. Supported by DARPA under Agreement No.\ HR00112020023.}
      \and
      Luowen Qian$^\dagger$
    \fi
  }
  \date{}
\begin{document}
\maketitle

\ifexabs\else
\begin{abstract}
In the classical model of computation,
it is well established that \emph{one-way functions} (OWF) are minimal for computational cryptography:  They are essential for almost any cryptographic application that cannot be realized with respect to computationally unbounded adversaries. In the quantum setting, however, OWFs appear not to be essential (Kretschmer 2021; Ananth et al., Morimae and Yamakawa 2022), and the question of whether such a minimal primitive exists remains open. 

We consider EFI pairs --- efficiently samplable, statistically far but computationally indistinguishable pairs of (mixed) quantum states.
Building on the work of Yan (2022), which  shows equivalence between EFI pairs and statistical commitment schemes, we show that EFI pairs are necessary for a large class of quantum-cryptographic applications.  Specifically,  we construct EFI pairs from minimalistic versions of commitments schemes, oblivious transfer,  
 and general secure multiparty computation, as well as from   $\cQCZK$ proofs from essentially any non-trivial language.  We also construct  quantum computational zero knowledge ($\cQCZK$)  proofs for all of $\cQIP$ from any EFI pair.

This suggests that, for much of quantum cryptography,  EFI pairs play a similar role to that played by OWFs in the classical setting: they are simple to describe, essential, and also serve as a linchpin for demonstrating equivalence between primitives.
\end{abstract}

\ifintroonly
  \let\subsection\section
\fi
\section{Introduction}
\fi

One of the most fundamental achievements of  cryptography has been the conceptualization and eventual  formalization of the forms of computational hardness that are needed for obtaining prevalent cryptographic tasks. Notions such as one-way functions \cite{DH76,Yao82} (capturing functions that can be computed efficiently but are hard  to meaningfully invert)  and pseudorandom generators \cite{Shamir83,Yao82,BM84}  (capturing the ability to efficiently expand short random strings into longer strings that are hard to distinguish from fully random)   became foundational pillars for the design and reduction-based analysis of cryptographic schemes that are only ``computationally secure'' (that is,  secure only against computationally-bounded attacks). Furthermore, 
 the celebrated equivalence between the two notions \cite{BM84,GoldreichL89,HILL} has cemented the combined concept as the ``foundational computational hardness for cryptography'': One that is essential for realizing almost any cryptographic task that requires computational 
 hardness, and  at the same time suffices for realizing  a large class of cryptographic tasks. %

However,  in the quantum setting, where  parties  can generate, process, and communicate quantum information, the lay of the land of computational hardness turns out to be  different. First, quintessential tasks such as key-exchange with only public communication, which classically can only be computationally secure, can be obtained without any need for computational hardness \cite{bennett1984quantum,Ren08}. 
Furthermore, quantum protocols can use (quantum-hard)  one way functions to obtain tasks that are provably unobtainable from one way functions alone in the classical setting, at least in a relativizing manner.
These include non-interactive commitments with either statistical  hiding or statistical binding property \cite{MahmoodyP12,YWLQ15,BitanskyB21}, and oblivious transfer \cite{IR89,CrepeauLS01,BartusekCKM21a,GLSV21}.   

Even further,  it has been recently shown how to obtain commitments, oblivious transfer and general multiparty computation from a  form    of computational hardness that appears to be  
``purely quantum'',  in the sense that it  does not appear to imply one way functions (or any equivalent formulation of computational hardness)   ---  not even ones against \emph{classical}  attackers \cite{AQY21,morimae2021quantum}.
Superficially, this new form of computational hardness, called {\em  Pseudorandom States (PRS),} is a straightforward  generalization of pseudorandom generators:  it postulates the ability to efficiently generate quantum states that are hard to distinguish from a Haar-random state even when given multiple instances \cite{ji2018pseudorandom,BS19,BrakerskiS20}. However this apparent similarity  is deceiving;  indeed, there are no relativizing constructions of one way functions from PRS  \cite{Kretschmer21}.
Also, in spite of initial attempts \cite{morimae2021quantum}, the celebrated classical equivalence between one-wayness and pseudorandomness does not appear to naturally generalize to the quantum setting, at least not with respect to PRS.
Still, we do  not know whether PRS are \emph{essential}  for realizing any of the above cryptographic primitives.

This leaves quantum cryptography devoid of a convenient form of ``foundational computational hardness'',  namely a form of computational hardness that is both necessary for any meaningful computational security, and sufficient for realizing a large class of  tasks.

\paragraph*{Our contributions.}
We formulate a relatively \emph{simple and natural}  primitive and show that its existence is both necessary and sufficient for a significant class of cryptographic applications in a quantum-enabled computational model. While many of these implications are either known or easily derived from known results, we hope that the proposed framing,  along with the new implications, will help in understanding the computational foundations of quantum cryptography.

The proposed primitive draws from a classical primitive  considered by Goldreich \cite{Gol90},  as well as from the notion of canonical quantum commitments proposed by  Yan \cite{Yan22}.  Goldreich's primitive is aimed at capturing non-trivial and ``cryptographically useful'' computational indistinguishability:

\begin{definition}[EFID pairs \cite{Gol90}]
	An  EFID pair is a pair of efficient (classical) sampling algorithms such that their output distributions are statistically far but computationally indistinguishable.
\end{definition}

Goldreich's work leverages the 
result of Impagliazzo, Levin, and Luby~\cite{ILL89} to show that EFID pairs
 exist if and only if (classical) pseudorandom generators exist. 
This, together with what we know about pseudorandom generators, means the existence of a  classical protocol for almost any  cryptographic task that requires computational hardness implies existence of EFID pairs,
and furthermore that EFID pairs suffice for realizing a large class of tasks.
We consider a natural quantum analogue of EFID pairs:
\begin{definition}[EFI  pairs,  informal]
	An EFI pair is a pair of efficient quantum algorithms whose output \emph{states} are statistically far but computationally indistinguishable.
\end{definition}

Clearly,  any quantum-hard EFID pair is also an  EFI pair. 
On the other hand, it is unknown whether existence of EFI pairs  implies the existence of  (classical) pseudorandom generators or one-way functions.
Indeed, the implication is false in the relativizing setting (see  \cite{Kretschmer21}, combined with \cite[Theorem 4.1]{AQY21}). 

A first indication that EFI pairs are central to quantum  cryptography is the observation that they are essentially equivalent to the statically binding variant of canonical form quantum commitments  \cite{Yan22}\footnote{ Indeed, Yan~\cite{Yan22}  suggests studying the connections between statistically binding canonical-form commitment schemes and other cryptographic primitives. This work follows the same path, while distilling EFI as the notion of interest. See more details in \Cref{sec:tech-overview}.  %
}.
Building on this initial connection, we demonstrate that the existence of EFI pairs is essential for the existence of {\em  any} commitment scheme,  oblivious transfer protocols, non-trivial  multi-party computation protocols,  and  zero-knowledge proofs for non-trivial languages. Furthermore, for each one of these primitives,  we use EFI pairs as a tool for demonstrating that existence of protocols for a minimal version of the primitive implies existence of protocols for a full-fledged version of that primitive.
Informally:

\begin{itemize}

	\item\textbf{Quantum commitment schemes.}
	A commitment scheme is a cryptographic protocol where a committer commits to a hidden bit so that it can be later revealed but not modified.
	As mentioned,  EFI pairs  can be readily used to build (cannonical form) statistically binding non-interactive quantum commitments, which subsequently imply statistically hiding commitments \cite{Yan22}.
 
    We construct EFI pairs from any  plain ``semi-honest'' interactive commitment scheme,  namely an interactive commitment scheme that is (computationally) binding and hiding as long as both parties are honest during the commitment phase.  (For commitment schemes that are either statistically hiding or statistically binding, this implication is essentially shown in \cite{Yan22}. We extend this results to any commitment.)
    
	\item \textbf{Quantum oblivious transfer.}
	An oblivious transfer scheme is a cryptographic protocol where a sender makes two  bits available to a receiver  in a way that enables the receiver to  obtain exactly one of them, without the sender learning which bit it obtained. %
	Fully secure  (namely,  simulation-secure  against adversaries that deviate from the protocol) quantum oblivious transfer  is known to be constructible from quantum statistically binding commitments \cite{BartusekCKM21a,AQY21} (and so also from EFI pairs).
 
 We show how to construct EFI pairs from any \emph{semi-honest OT}  protocol, namely any OT protocol that is only guaranteed to be secure when both parties follow the protocol instructions without abort and up to purifications (namely, without tracing out any register used by each party).
	\item \textbf{Quantum secure multiparty computation  protocols.}
	A multiparty secure computation protocol is a protocol where participating parties jointly compute the output of a function of their secret inputs without revealing anything but the function value.
	Known constructions of general MPC from any statistically binding quantum commitment \cite{AQY21}  imply that EFI pairs can also be used to perform secure evaluation of any functionality. %
 
	We show how any protocol for securely evaluating any non-trivial classical finite functionality (namely a function with an insecure minor  as in \cite{BeimelMM99}), even in the semi-honest model, implies the existence of EFI pairs.
	\item \textbf{Quantum computational zero-knowledge ($\cQCZK$) proofs.}
	Finally, a computational zero-knowledge proof is an interactive proof system where any malicious verifier cannot learn anything beyond the fact that the statement is true, in the sense that their view could be efficiently simulated  given only the public instance.
	
    By observing that zero knowledge is a special case of secure two-party computation (2PC), we have that if EFI pairs exist then $\cQCZK = \cQIP$, and any language in $\cQMA$ also admits $\cQCZK$ proofs with negligible soundness and an efficient prover that uses only a single copy of the witness.
	Conversely, we build on results from \cite{YWLQ15} to construct EFI pairs from any \emph{honest verifier} $\cQCZK$ proof ($\cQCZKHV$) for any language that is hard on average for $\cBQP$.
	($\cQCZKHV$ is a relaxation of $\cQCZK$ where  zero knowledge property is guaranteed to hold only against purified verifiers with abort, rather than arbitrary polytime verifiers.)
\end{itemize}

Furthermore, all these equivalences relativize\ifexabs\else\footnote{
  We use the term \emph{relativizing} to denote that the construction works even in the presence of (quantum) oracles.
  In particular, this allows the construction to invoke the next message function of a protocol as a black box or even run its purification.
	An example of a relativizing implication is Watrous's  construction of a  $\cQSZK$ protocol from  any honest-verifier $\cQSZKHV$ protocol \cite{Wat02}.

  We  reserve the use of the term  \emph{black-box  reductions} to denote reductions which are black  box in a primitive,  namely reductions that use   oracle access to the ideal functionality describing the primitive, irrespective of any particular implementation (as e.g.\ in the work of Kilian~\cite{Kil91}).
}\fi.
Thus, Kretschmer's oracle separation \cite{Kretschmer21} immediately generalizes to show that none of the objects constructible from EFI pairs (or pseudorandom states) imply the existence of one-way functions (post-quantum or not) in a relativizing way.

\ifexabs\else
\subsection{Our techniques}
\label{sec:tech-overview}

This section presents an overview of the proofs for our results.

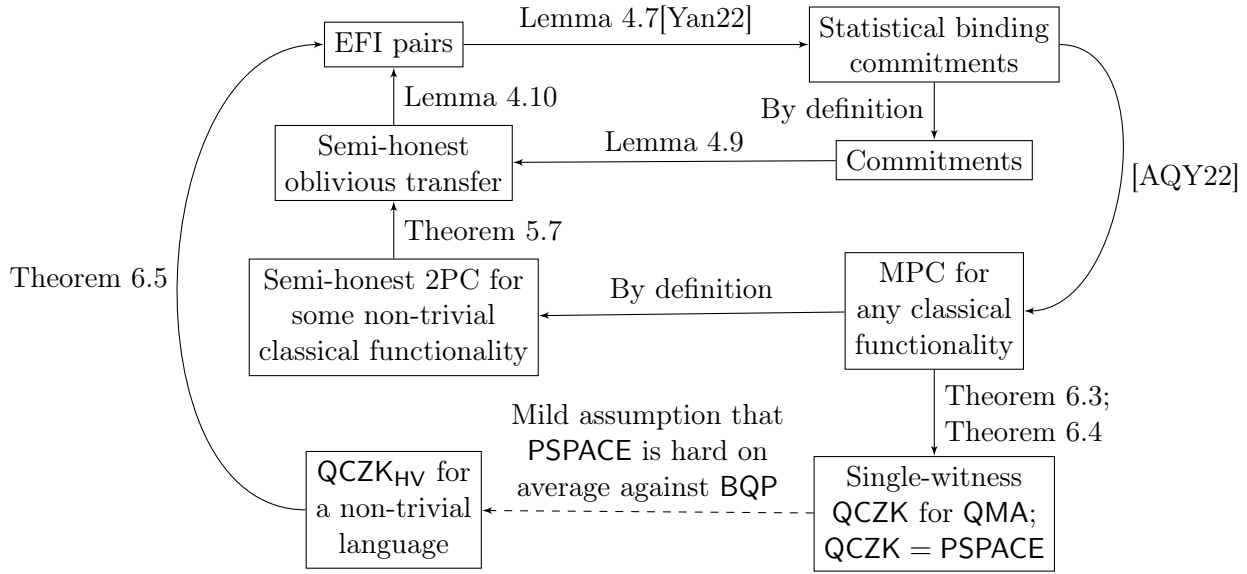
\begin{figure}
  \centering
  \tikzstyle{block} = [rectangle, draw, text centered, align=center]
\tikzstyle{implies} = [draw, -latex']
\tikzstyle{equiv} = [draw, latex'-latex']

\begin{tikzpicture}
\node (efi) [block] {EFI pairs};
\node (statcom) [block, right=12em of efi] {Statistical binding\\commitments};
\node (com) [block, below=2em of statcom] {Commitments};
\node (shot) [block, below=2em of efi] {Semi-honest\\oblivious transfer};

\path [implies] (efi) -- node [above] {\ifintroonly\else\Cref{lem:qg2commitments}\fi \cite{Yan22}} (statcom);
\path [implies] (statcom) -- node [left] {By definition} (com);
\path [implies] (com) -- node [above] {\ifintroonly\else\Cref{lem:comm2sot}\fi} (shot);
\path [implies] (shot) -- node [right] {\ifintroonly\else\Cref{lem:sot2qg}\fi} (efi);

\node (mpc) [block, below=2.5em of com] {MPC for\\any classical\\functionality};
\node (sh2pc) [block, below=2em of shot] {Semi-honest 2PC for\\some non-trivial\\classical functionality};

\path [implies] (statcom) to[out=0,in=0] node [right] {\cite{AQY21}} (mpc);
\path [implies] (mpc) -- node [above] {By definition} (sh2pc);
\path [implies] (sh2pc) -- node [right] {\ifintroonly\else\Cref{thm:sh2pc2efi}\fi} (shot);

\node (qczkstrong) [block, below=3em of mpc] {Single-witness\\$\cQCZK$ for $\cQMA$;\\$\cQCZK = \cPSPACE$};
\node (qczkweak) [block, below=2.6em of sh2pc] {$\cQCZKHV$ for\\a non-trivial\\language};

\path [implies] (mpc) -- node [right, align=left] {\ifintroonly\else\Cref{thm:efi2qczk-efficient};\\\Cref{thm:qczk-qip}\fi} (qczkstrong);
\path [implies, dashed] (qczkstrong) -- node [above, align=center] {Mild assumption that\\$\cPSPACE$ is hard on\\average against $\cBQP$} (qczkweak);
\path [implies] (qczkweak) to[out=180,in=180] node [left, align=left] {\ifintroonly\else\Cref{thm:hardness2qci}\fi}(efi);
\end{tikzpicture}
  \caption{
    Our results for EFI pairs illustrated.
    We give a more detailed overview of these implications in \Cref{sec:tech-overview}.
  }
  \label{fig:results}
\end{figure}

\paragraph*{EFI pairs and statistical commitments.}
In the classical setting,   building a commitment scheme from  EFID pairs would naturally go via  Goldreich's transformation  to a PRG,  and then use,  say,
Naor's commitment \cite{Naor91}. However, it is not clear how this transformation could be generalized to the quantum setting. In particular,
Goldreich's proof crucially relies on the fact that for a $\cBPP$ (randomized) algorithm, it is possible to separate the randomness from the rest of the computation --- or even arbitrarily \emph{program} the randomness.  
 Such techniques cannot work for quantum algorithms, as also observed by the recent work of Aaronson, Ingram, and Kretschmer~\cite{AIK21} comparing the complexity classes $\cBPP$ and $\cBQP$.

Still, as noticed several times in the literature, 
 EFI  pairs give quantum  commitments in a rather direct way.
To sketch this basic construction, we first recall the syntax of a canonical form quantum commitment scheme:\footnote{
  This (non-interactive) canonical form of quantum bit commitment schemes was first introduced by the work of Yan et al.~\cite{YWLQ15}, but the idea dates back to the work of Chailloux, Kerenidis, and Rosgen~\cite{ChaillouxKR11}.
  Subsequently, it has been shown that canonical form commitments are as useful as traditional bit commitments by Yan et al.~\cite{YWLQ15,FUYZ20,Yan22}
  Its connection to (classical) EFID pairs was observed by Yan in a 2022 revision of their work~\cite{Yan22}.
}
\begin{itemize}
	\item To commit to $b$, the committer efficiently generates a bipartite (i.e.,  two-register)  state $\ket{\psi_b}_{\mathsf{CR}}$, and sends the commitment register $\mathsf C$.
	\item To open, simply reveal the other register $\mathsf R$ and $b$, and the receiver can perform a rank-$1$ projection onto the corresponding state $\ket{\psi_b}_{\mathsf{CR}}$ (or equivalently, uncompute the state generation unitary for committing to $b$ and check if we get back all zeroes) to check whether to accept the commitment.
\end{itemize}

It is possible to view the purified generation of the EFI pair as a  canonical form commitment where the output corresponds to the commitment register $\mathsf C$ and the purification corresponds to the opening register $\mathsf R$. When viewed this way,  the statistical distance guarantee of the EFI  pair translates to  statistical binding property,  whereas  the computational indistinguishability of the EFI  pair translates to the computational hiding  property of the commitment. 
Thus an EFI pair is essentially a statistically binding canonical form commitment.  This  observation  (which is implicit in \cite{Yan22}) is indeed the starting point of our work.

Furthermore, the round collapsing theorem in \cite{Yan22} shows that any quantum commitment can be compiled into the canonical form while preserving the hiding and binding properties.
Since statistically binding commitments and statistically hiding commitments are equivalent \cite{CrepeauLS01,Yan22}, we can construct EFI pairs from either one.

Showing how to construct EFI pairs from commitment schemes that are neither statistically binding nor statistically hiding appears more challenging.
One may hope to somehow construct a candidate EFI pair of states, and prove computational indistinguishability from computational security of the commitment, and statistical distance via an inefficient attack on the commitment scheme.
However, it is not clear how to transform an inefficient attack against the binding property into a distinguishing attack as needed for EFI pairs.
(Classically, this  part can be done using one-way functions,  but, as argued above, these techniques do not have natural quantum analogues.) %

We get around this difficulty by going through oblivious transfer, where the security for both ends can be naturally viewed as distinguishing tasks, and is thus more amenable to constructing EFIs even without  statistical security. 
In fact, we observe that semi-honest OT suffices.  Let us elaborate.

\paragraph*{EFI pairs from semi-honest OT.}
We first define a new  notion of quantum oblivious transfer,  which considers only ``purification attacks''. %
(This notion can be viewed as a quantum analogue of the classical ``non-erasing, honest-but-curious'' attacks.)
Furthermore, the semi-honest adversaries must run the purified protocol till the end, i.e.\ are disallowed to abort.
(Classically, this does not matter.)

We first describe constructing EFI from semi-honest OT, which is more straightforward.
Consider the adversarial view in the following two  executions:
\begin{itemize}
  \item
    Here the sender is honest and the receiver is semi-honest. The  sender chooses the bits uniformly at random and the semi-honest receiver chooses the choice bit uniformly at random.
    (By correctness, the receiver is always able to recover one bit specified by the choice bit with certainty.
    The task for the semi-honest receiver is to extract the sender's other bit from his purified view.)
  \item
    Here the receiver is honest and the sender is semi-honest. The receiver chooses the choice bit uniformly at random and the semi-honest sender prepares two equal superposition states $\ket+^{\otimes 2}$ for her input bits.
    (Here the task for the semi-honest sender is to extract receiver's choice bit from her purified view.)
\end{itemize}
It is easy to see that these views can be computed efficiently since the OT protocol is efficient; and by the semi-honest security of OT, both tasks should be impossible for efficient algorithms.
At the same time, the impossibility of Chailloux, Gutoski, and Sikora~\cite{CGS16}  states that for every OT protocol, one of these two tasks can be accomplished with success probability $\ge\frac23$ inefficiently.

Now the construction of EFI pairs follows  by simply reinterpreting these bit extraction tasks as a distinguishing task.
In particular, for $b = 0, 1$, the $b$-th state   is simply the concatenation of the  two above views, conditioned on the correct answers being $b$ in both executions.
In other words, the $b$-th state consists of the semi-honest receiver's view of the first execution when the sender's other bit is $b$, followed by the semi-honest sender's view of the second execution when the receiver's choice bit is $b$.

\paragraph*{Semi-honest OT from commitments.}
Many prior works have already studied constructing quantum oblivious transfer from commitments \cite{CrepeauK88,BennettBCS91,Crepeau93,CrepeauLS01,FUYZ20,Yan22}.
However, they all start with a commitment with some statistical security guarantee -- either statistically binding or statistically hiding. On the other hand, we want to start from an arbitrary commitment scheme (which may be computationally binding and computationally hiding).
While these constructions could probably still carry over, since here we are aiming for a much weaker security, we instead give a much simplified protocol with a self-contained description.
Let us begin by considering the simplest (almost trivial) quantum oblivious transfer protocol inspired by Cr\'epeau and Kilian \cite{CrepeauK88}, which is only secure if both parties are completely honest during the protocol:
\begin{enumerate}
	\item The sender on input two message bits $b_0, b_1$, sends two qubits $\ket{b_0} \otimes H\ket{b_1}$, where the first qubit encodes the first message bit in the standard basis and the second qubit encodes the second message bit in the Hadamard basis.
	\item The receiver measures both qubits in the standard basis to recover $b_0$ and a random bit $b_1'$, or in the Hadamard basis to recover a random bit $b_0'$ and $b_1$.
\end{enumerate}
Security is straightforward: the sender gets no information at all, and the receiver destroys the information about the other bit by measuring it in an incompatible basis.

However, this is obviously not semi-honest secure as a purified receiver could simply uncompute a purified measurement to recover the other bit as well.
Indeed, a better way to ``erase'' information in the semi-honest model is to simply send it to the other party: in this case, the receiver sends a copy of these two measured (classical) bits to the sender.
Since the semi-honest receiver's purified view at the end has the sender's private registers traced out, doing this ensures that these measured bits do indeed collapse even for the purified view.

On the other hand, since these two bits contains information about the receiver's choice bit, we cannot simply send it in the clear as now the sender can break.
One simple fix is to have the receiver instead commit to the two measured bits to the sender.
This strategy is also often employed when designing maliciously secure OT from commitments \cite{BennettBCS91,Crepeau93,CrepeauLS01}.

It is easy to see that this committed-measurement OT remains secure against semi-honest sender by hiding of the commitment, and security against semi-honest receiver remains to be seen.
If the commitment is statistically binding, then it can be seen that the collapse still occurs; but the computational binding case seems less clear.
Fortunately, this can be overcome through Yan's computational collapse theorem \cite{Yan22}, which on a high level states that a canonical form computationally binding commitment scheme computationally ``collapses'' the commited qubit (given that the commit phase was performed semi-honestly), even if the commitment is never opened.
(Their theorem is established via reducing a collapsing distinguishing adversary to an adversary that breaks computational binding of the canonical form commitment.)
This completes the argument.

\paragraph*{Multiparty secure computations for classical functionalities.}
Using a known sequence of transformations outlined in  existing works \cite{AQY21} (which builds on existing works including but not limited to \cite{BennettBCS91,BartusekCKM21a,GLSV21}), it is already known how construct, given any statistically binding quantum commitment scheme (and hence also given any EFI pair), multi-party protocols that securely evaluate any classical function with any number of faults.
Furthermore, these protocols provide statistical security guarantees against at least one of the parties.
As an aside, via known results, these protocols can further be used to construct two-party secure computations for general quantum functions (or channels) where only one party obtains output \cite{DupuisNS12}, and in addition,  reactive (meaning stateful and interactive) classical functionalities \cite{CrepeauGT95,IshaiPS08}.

For the converse direction, we can now use  the powerful equivalence established for oblivious transfer above, and  simply invoke the classical equivalence of Beimel, Malkin, and Micali~\cite{BeimelMM99} to complete the proof.
While the \cite{BeimelMM99}  proof contains parts which do not naturally generalize to the quantum setting, the only thing we need from that proof is the reduction from semi-honest OT to semi-honest 2PC for any non-trivial classical functionality (i.e.\ if it contains an insecure minor), and this construction is black-box and hence extends to our setting.
While the semi-honest models are slightly different, we verify that their semi-honest reduction also works for our model.
Once we have semi-honest OT, we get EFI pairs by the equivalence above.

\paragraph*{Zero knowledge proofs from EFI pairs.}
We now turn to establishing an equivalence between EFI pairs and non-trivial quantum computational zero knowledge ($\cQCZK$) proofs. %
We  first consider the task of constructing $\cQCZK$ protocols for $\cQMA$ from EFI pairs.
Here the commit-and-open $\cQCZK$ protocol  by Broadbent and Grilo~\cite{BroadbentG20} can be readily instantiated by any quantum commitment. However, this protocol uses sequential repetition, and as a consequence, requires multiple copies of the quantum witness to achieve negligible soundness.
As also proposed by Broadbent et al.~\cite{BJSW20}, this limitation can be avoided via performing 2PC for quantum (CPTP) functionalities, which can be constructed from OT \cite{DupuisNS12} and thus EFI pairs.

We now move on to general $\cQCZK$ proofs without any constraint on prover efficiency and show how to use EFI pairs to construct $\cQCZK$ proofs for all of $\cQIP$.
Before presenting our protocol, let us recall the celebrated construction of Ben-Or et al.~\cite{Ben-OrGGHKMR88} that transforms any (without loss of generality, Arthur-Merlin or public-coin) $\cIP$ protocol into a $\cCZK$ protocol.
In the transformed protocol, the parties first run the original public-coin protocol, where the prover only sends (statistically binding) commitments to its messages. Next, the parties engage in a zero-knowledge protocol where the instance consists of the transcript so far, and the language accepts a transcript if there exist valid openings to all the prover commitments that would have caused the original verifier to accept. 

Since $\cQIP = \cIP$ \cite{Sha92,JJUW11}, it is natural to consider extending  the \cite{Ben-OrGGHKMR88} construction to our setting. 
However, direct extension hits a roadblock: the statement that needs to be proven in zero knowledge is now a \emph{quantum statement} involving the commitment states when instantiated with quantum commitments.
This is not a context that is traditionally considered by the zero knowledge literature. (Indeed, recall that even zero knowledge proofs for $\cQMA$ still consider classical statements.)
Even if we attempt to mimic a zero knowledge proof via statistical 2PC, we soon encounter another issue: how should the two parties agree on the quantum statement that is being proven?
Sure, we could make the verifier send the quantum state in the statement to 2PC, but a malicious verifier could refuse to provide the correct state.
This becomes an issue as the verifier might be able to manipulate the commitment message so that checking the validity of the commitments itself might reveal non-trivial information about the committed bit.

We thus take a different path: we have the prover and the verifier engage in a secure evaluation of the following 
reactive functionality (which also can be constructed from OT \cite{CrepeauGT95,IshaiPS08}). The verifier inputs its random challenges in the underlying interactive proof, and the functionality uses these challenges to  play the verifier role in an interactive proof with the external prover. Finally  the functionality outputs the acceptance bit to the external verifier.   Both soundness and zero knowledge follow from the security of the MPC.

\paragraph*{EFI pairs from non-trivial $\cQCZKHV$.}
Finally,  we show how to construct EFI  pairs from any   $\cQCZK$ proof for any language that is hard on average against $\cBQP$.
Note that the computational indistinguishability given by the $\cQCZK$ security does not give EFI pairs immediately as it might be possible to generate the hoenst view efficiently.

One possible approach might be to try to extend the classical result of Ostrovsky and Widgerson~\cite{OW93} to our setting.
However, they use the non-existence of one way functions to build universal extrapolators that efficiently turn simulators into cheating provers, and it is not clear how to use the non-existence of EFI pairs to construct quantum universal extrapolators.

We instead turn to the works of Ong and Vadhan~\cite{Vadhan06,OngV08}, showing an equivalence between instance-dependent commitments and $\cCZK$, that is a language admits an instance-dependent commitment (a commitment, parameterized by an instance $x$, whose computational hiding and statistical binding properties only hold if $x$ is in or not in the language, respectively) if and only if it admits a $\cCZK$ proof.
We note that if a hard-on-average language admits an instance-dependent commitment, then this commitment is essentially a full-fledged commitment, thus implying the existence of EFI pairs.
Therefore, it remains to extend the equivalence to the quantum setting, i.e.\ we wish to establish that any language admits a $\cQCZK$ proof if and only if it admits instance-dependent quantum commitments.

The bad news is that going from $\cCZK$ to instance-dependent commitments again involve going through instance-dependent one-way (universal hash) functions.
However, we note that the mixed states considered by Watrous~\cite{Wat02} for handling $\cQSZK$ readily gives an instance-dependent mixed state for $\cQCZK$ protocols: a weak variant of EFI states that is only required to satisfy either statistical farness or computational indistinguishability if $x$ is in or not in the language, respectively.
It can be seen that the transformations described before also readily extends to the instance-dependent setting, and thus this gives an instance-dependent commitment.
Indeed, this transformation from $\cQCZK$ to instance-dependent quantum commitments has  been observed by the work of Yan et al.~\cite{YWLQ15}.
We then conclude that if $L$ is hard on average for $\cBQP$, then this instance-dependent mixed state averaged over the hard distribution immediately gives an EFI pair.

Upon the completion of this work, we discovered a result by Chailloux, Kerenidis, and Rosgen~\cite{ChaillouxKR11} that is similar to this part with very similar proof techniques.
However, their separation is between $\cQIP = \cPSPACE$ and $\cQMA$, which is technically incomparable with our separation between $\cQCZK$ and $\cBQP$.
Furthermore, they consider worst case hardness instead of average case hardness here, and thus only getting quantum auxiliary-input EFI.
In our case, this difference is rather minor and the results can translate back and forth\ifintroonly\else{} (see \Cref{thm:aiefi})\fi; and in their case, it is not clear how to get standard EFI pairs from any notion of average case hardness of $\cQIP$ against $\cQMA$.\footnote{
  For readers that are familiar with \cite{ChaillouxKR11}, even if we assume $\cQIP$ is hard on average against $\cQMA$ for $\cBQP$ samplable distributions, this still does not suffice for getting EFI pairs without quantum auxiliary input.
  The reason is that the quantum auxiliary input needs to specify the state that witnesses the diamond norm of the two channels, and it is not clear how this state could be prepared efficiently.
}

\subsection{Discussions and open questions}
\label{sec:discussions}

We now give a few open questions in this direction, organized into three categories.
To keep the discussion succinct, we point the readers to the references for details of the terminologies.

\paragraph*{EFI and quantum complexity.}

One way functions (and equivalently pseudorandom generators and classical EFID pairs) have been one of the central objects in complexity theory \cite{AB09}.
Since EFI pairs  are both essential and sufficient for much of quantum cryptography, and furthermore are very simple to describe, it is natural to ask whether EFI pairs could also be a useful object to study from the complexity point of view.
Note that the computational hardness underlying EFI pairs, which is the quantum state distinguishability problem, seems especially relevant to the study of the complexity of quantum states and transformations \cite{Aar16}.

One very important question, we think, is whether there is any barrier for proving the existence of EFI pairs.
In the classical setting,  existence of one-way functions implies $\cP \neq \cNP$, but is there any barrier for establishing the existence of quantum EFI pairs?
EFI pairs would immediately imply a quantum circuit lower bound for an explicit two-outcome measurement, but is there any reason to believe that such a lower bound would be hard to establish?
	
	For a more concrete example, is $\cP$ vs $\cPSPACE$ a (classical) barrier for the existence of EFI pairs?
	In other words, does the existence of EFI pairs separate $\cBQP$ from $\cPSPACE$?
	We know that the existence of pseudorandom states do separate $\cBQP$ from $\mathsf{PP} = \mathsf{PostBQP}$ \cite{Kretschmer21}, but nothing is known for EFI pairs.
	One way to achieve this could be to demonstrate a way to synthesize the Helstrom measurement given a $\cPSPACE$ oracle, which is closely related to the unitary synthesis problem \cite{AK07}: in particular, if the unitary synthesis problem could be done by a $\cPSPACE$ oracle, then the existence of EFI pairs would separate $\cPSPACE$ from $\cBQP$.

 \paragraph*{Hardness amplification for EFI.} Another intriguing challenge is hardness amplification for EFI.  Is it possible to construct full-fledged  EFI pairs from  weaker veriants where either the computational distance is non-negligible, or the statistical distance is bounded away from 1,  or both?  What is the minimal initial gap between the statistical and computational  distances that still allows amplification?

\paragraph*{Candidate EFI?}
Given the oracle separation, are there any concrete candidate assumptions that imply quantum EFI, with formal evidence that it does not imply one-way functions?

Two natural candidates come from pseudorandom state candidates.
One possible approach is to assume that ``sufficiently large'' quantum random circuits are pseudorandom unitaries.
These random quantum circuits are already being investigated with motivations like quantum supremacy \cite{AA13,AaronsonC17} and the theory of black holes \cite{BS18,BCHKP21,Haferkamp2022}.
One could hope that pursuing this direction could ultimately lead to useful quantum cryptography that could be implemented on near-term quantum devices.
The other possible approach is to consider the pseudorandom states proposed by Bouland, Fefferman, and Vazirani~\cite{BoulandFV20} from the physical description of wormholes.
Intuitively, such a construction could be secure based on the physical belief that a wormhole is highly ``scrambling''.
In that paper, they also prove that their construction is a secure pseudorandom state generator if the evolution unitary is a black-box Haar random unitary.

Another candidate is proposed by Kawachi, Koshiba, Nishimura, and Yamakami~\cite{Kawachi2012}.
On a high level, they consider computational indistinguishability between two types of random coset states, and show that it is at least as hard as the graph automorphism problem.

\paragraph*{EFI and quantum cryptography.}
The importance of one way functions in classical cryptography cannot be overstated: virtually any non-trivial computational cryptography
(those that cannot be realized with respect to computationally unbounded adversaries)
implies the existence of one way functions classically.
Yet it still remains to be seen how much of quantum cryptography is related to EFI.
For instance:
\begin{description}
	\item[Quantum pseudorandomness.]
	The celebrated result of Goldreich~\cite{Gol90} shows existence of classical EFI pairs imply pseudorandom generators and subsequently pseudorandom functions.
	While we know how to construct various quantum cryptography from quantum EFI pairs, the way we do it completely avoids the need to construct quantum pseudorandomness.
	Nevertheless, given the many applications of both classical and quantum pseudorandomness, an important question is whether it is possible to construct quantum pseudorandomness (pseudorandom states, unitaries, or any other meaningful pseudorandom objects) from EFI pairs.
	 \item[Quantum unforgeability.]
 Cryptographic primitives such as  digital signatures,  message authentication codes, or quantum money  appear to inherently require some flavor of one-wayness,  in that a break involves solving a computationally hard search problem where solutions exist and  are efficiently verifiable given some additional secret information. (In the public key setting, where some classical verification key is made public, solutions are verifiable publicly.)
 Another related object already proposed previously is one-way state generators \cite{morimae2021quantum}.
 As discussed, this form of computational hardness appears very different than indistinguishability. Still,  can we show that existence of any one of these primitives implies existence of EFI  pairs?  Can we construct any of these primitives from EFI  pairs?

		  \item[Quantum zero knowledge arguments.]
	It is possible to extend our proof \ifintroonly\else(\Cref{thm:efi2qczk-efficient})\fi{} to show that if EFI pairs exist then we can give a $\cQSZK$ argument for any $\cQMA$ language with an efficient prover having a single copy of the witness.
	On the other hand, we are unable to show that if $L$ admits an argument then the instance-dependent commitment \ifintroonly\else\ifitcsproc\else{}constructed in \Cref{thm:qczkhv-qczk}\fi\fi{} is computationally binding for NO instances.
	On a high level, the difficulty is that known techniques in the statistical binding setting do not translate to the computational setting, since we do not have a hardness amplification procedure for computationally binding commitments: it is not clear whether parallel repetition of commitments decreases the computational binding error against malicious committers.
	We refer the readers to the related discussions in Yan's work \cite{Yan22} for more details.
\end{description}

\fi

\ifexabs\else
\ifintroonly\else
\section{Preliminaries}

\subsection{Quantum information}

We refer the reader to~\cite{nielsen_chuang_2010} for a comprehensive reference on the basics of quantum information and quantum computation. We use standard Dirac notation for quantum states. 

We recall the notion of density matrices, which are PSD trace-$1$ matrices that represent the complete characterization of a state of a quantum system. The state of a system can be ``pure'', i.e.\ in the form of a state $\ket{\psi}$, in which case the density matrix is $\ketbra{\psi}{\psi}$ (i.e.\ of rank $1$), or ``mixed'' which corresponds to a distribution over pure states, and is represented by a density matrix of rank $> 1$. Two quantum states are identical if and only if their density matrices are equal, and the distance between quantum states is also expressed as a function of their density matrices, as explained below. 

We recall that quantum operations can always be expressed as unitary operators on \emph{some} quantum system, and they act on the density matrix of this quantum system via conjugation. We may sometimes refer to a quantum operation that uses some auxiliary registers, or that removes (``traces out'') registers during the computation. Such a general quantum operation is known as a \emph{quantum channel}. (Mathematically, a quantum channel can be expressed as a completely-positive trace-preserving (CPTP) map on the space of density matrices, but this formulation will not be required for our purposes.)

We use $\mathcal D(\mathcal{H})$ to denote the set of density matrices on a Hilbert space $\mathcal{H}$.
Let $\rho,\sigma \in \mathcal D(\mathcal{H})$ be density matrices. We write $\TraceDist(\rho,\sigma)$ to denote the trace distance between them, i.e.,
\[
\TraceDist(\rho,\sigma) = \frac{1}{2} \norm{\rho - \sigma}_1
\]
where $\norm{X}_1 = \Tr(\sqrt{X^\dagger X})$ denotes the trace norm.
We also use $F(\rho, \sigma) = \paren{\Tr\mparen{\sqrt{\sqrt\rho\sigma\sqrt\rho}}}^2$ to denote the fidelity of $\rho$ and $\sigma$.

\begin{fact}
  \label{fact:tdnf}
  For any two mixed states $\rho, \sigma$, $\mparen{F(\rho, \sigma)}^2 + \mparen{\TraceDist(\rho, \sigma)}^2 \le 1$.
\end{fact}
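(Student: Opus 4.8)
The plan is to derive \Cref{fact:tdnf} from the Fuchs–van de Graaf inequality, whose upper bound in fact gives the sharper relation $\mparen{\TraceDist(\rho,\sigma)}^2 + F(\rho,\sigma) \le 1$; the stated claim then follows for free, since $F(\rho,\sigma) \in [0,1]$ forces $\mparen{F(\rho,\sigma)}^2 \le F(\rho,\sigma)$. So it suffices to establish $\mparen{\TraceDist(\rho,\sigma)}^2 \le 1 - F(\rho,\sigma)$, keeping in mind that the paper's $F$ is the \emph{squared} (Uhlmann) fidelity.

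First I would dispose of the pure-state case by direct computation. If $\rho = \ketbra{\psi}{\psi}$ and $\sigma = \ketbra{\phi}{\phi}$, then $\sqrt\rho = \rho$, so $\sqrt{\sqrt\rho\sigma\sqrt\rho} = \abs{\braket{\psi|\phi}}\,\ketbra{\psi}{\psi}$ and hence $F(\rho,\sigma) = \abs{\braket{\psi|\phi}}^2$. The operator $M = \rho - \sigma$ is traceless and supported on the (at most two-dimensional) span of $\ket\psi,\ket\phi$, so its nonzero eigenvalues are $\pm\lambda$; computing $\Tr(M^2) = 2\mparen{1 - \abs{\braket{\psi|\phi}}^2} = 2\lambda^2$ yields $\lambda = \sqrt{1 - \abs{\braket{\psi|\phi}}^2}$. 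Therefore $\TraceDist(\rho,\sigma) = \lambda$ and $\mparen{\TraceDist(\rho,\sigma)}^2 + F(\rho,\sigma) = 1$, with equality in the pure case.

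To lift this to arbitrary mixed $\rho,\sigma$, I would invoke two standard facts. By Uhlmann's theorem, $F(\rho,\sigma)$ equals the maximum of $\abs{\braket{\psi|\phi}}^2$ over all purifications $\ket\psi$ of $\rho$ and $\ket\phi$ of $\sigma$ on a common enlarged space; fix a pair attaining this maximum. Since tracing out the purifying register is a quantum channel and trace distance is contractive under channels, $\TraceDist(\rho,\sigma) \le \TraceDist\mparen{\ketbra{\psi}{\psi}, \ketbra{\phi}{\phi}}$. Combining this with the pure-state identity applied to the optimal pair gives $\mparen{\TraceDist(\rho,\sigma)}^2 \le 1 - \abs{\braket{\psi|\phi}}^2 = 1 - F(\rho,\sigma)$, which completes the argument.

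The only genuinely nontrivial inputs are Uhlmann's theorem (identifying the fidelity with the best pure-state overlap) and the monotonicity of trace distance under partial trace; both are standard and may simply be cited from \cite{nielsen_chuang_2010}. I expect the main point of care — rather than a real obstacle — to be tracking the squared-versus-unsquared fidelity conventions, so that the contraction step is matched against the correct pure-state expression; the remaining content is the elementary eigenvalue computation above.
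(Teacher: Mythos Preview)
The paper does not prove this statement at all: it is recorded as a \emph{Fact} and left unproved, treated as a standard inequality from quantum information (the upper Fuchs--van de Graaf bound, weakened by replacing $F$ with $F^2$). Your derivation is correct --- the pure-state eigenvalue computation and the Uhlmann-plus-contractivity lift to mixed states are exactly the standard route to the Fuchs--van de Graaf upper bound, and your observation that $F^2 \le F$ then recovers the weaker form the paper states --- so you have simply supplied a proof where the paper chose to cite the result as folklore.
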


\begin{theorem}[{Holevo--Helstrom~\cite{Holevo73,Helstrom69}}]
  \label{thm:holevo-helstrom}
  The best success probability to (inefficiently) distinguish two mixed states $\rho, \sigma$ is given by $\frac12\paren{1 + \TraceDist(\rho, \sigma)}$.
  The measurement that achieves this success probability is called the \emph{Helstrom measurement}.
\end{theorem}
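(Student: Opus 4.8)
The plan is to reduce the distinguishing task to a convex optimization over measurement operators and then solve that optimization via the spectral decomposition of $\rho - \sigma$. First I would observe that, since we are only making a binary guess, any measurement strategy can be coarse-grained into a two-outcome POVM $\{E_0, E_1\}$ with $E_0 + E_1 = I$ and $0 \le E_0, E_1 \le I$, where outcome $0$ means ``guess $\rho$'' and outcome $1$ means ``guess $\sigma$''. Assuming each state is presented with prior probability $\frac12$, the success probability is
\[
  P_{\mathrm{succ}} = \tfrac12 \Tr(E_0 \rho) + \tfrac12 \Tr(E_1 \sigma) = \tfrac12 + \tfrac12 \Tr\mparen{E_0 (\rho - \sigma)},
\]
where the second equality uses $E_1 = I - E_0$ together with $\Tr(\sigma) = 1$. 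Hence maximizing the success probability is equivalent to maximizing $\Tr(E_0 \Delta)$ over all operators $0 \le E_0 \le I$, where $\Delta = \rho - \sigma$ is Hermitian and, crucially, satisfies $\Tr(\Delta) = 0$ since both $\rho$ and $\sigma$ are trace-$1$.

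Next I would diagonalize $\Delta = \sum_i \lambda_i \ketbra{v_i}{v_i}$ and write $\Tr(E_0 \Delta) = \sum_i \lambda_i \bra{v_i} E_0 \ket{v_i}$. Because $0 \le E_0 \le I$ forces $0 \le \bra{v_i} E_0 \ket{v_i} \le 1$ for every $i$, this weighted sum is maximized by placing weight $1$ on the positive eigenvalues and weight $0$ on the negative ones, i.e.\ by taking $E_0 = P_+$, the orthogonal projector onto the span of the eigenvectors with $\lambda_i > 0$. This single computation simultaneously establishes the upper bound (no POVM can exceed this value) and exhibits the optimizer, with optimal value $\sum_{i : \lambda_i > 0} \lambda_i$.

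The final step links this quantity to the trace distance using the traceless condition: since $\sum_i \lambda_i = 0$, the sum of the positive eigenvalues equals the sum of the absolute values of the negative eigenvalues, so $\sum_{i : \lambda_i > 0} \lambda_i = \frac12 \sum_i \abs{\lambda_i} = \frac12 \norm{\Delta}_1 = \TraceDist(\rho, \sigma)$. Substituting back yields $P_{\mathrm{succ}}^{\max} = \frac12\paren{1 + \TraceDist(\rho,\sigma)}$, with the optimal (Helstrom) measurement being $\{P_+, I - P_+\}$. I do not expect a serious obstacle here, as this is a classical and clean result; the only points requiring slight care are justifying the reduction to two-outcome POVMs and the bound $0 \le \bra{v_i} E_0 \ket{v_i} \le 1$, both of which follow routinely from $0 \le E_0 \le I$. (Note that optimality is information-theoretic: the projector $P_+$ need not be efficiently implementable, consistent with the word ``inefficiently'' in the statement.)
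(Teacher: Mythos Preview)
Your proof is correct and is the standard textbook argument for the Holevo--Helstrom bound. The paper, however, does not supply its own proof of this statement: it is stated as a cited classical result (Holevo~1973, Helstrom~1969) and used as a black box, so there is nothing to compare against.
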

\begin{corollary}
  \label{cor:td-amplification}
  For any mixed states $\rho, \sigma$ and integer $n > 0$,
  \[ \TraceDist(\rho^{\otimes n}, \sigma^{\otimes n}) \ge 1 - \exp\mparen{-n\TraceDist(\rho, \sigma)/2}. \]
\end{corollary}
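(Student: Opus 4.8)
The plan is to reduce the $n$-copy statement to a single-copy inequality, using the multiplicativity of fidelity together with the Fuchs--van de Graaf relations between fidelity and trace distance (one direction of which is \Cref{fact:tdnf}). Conceptually this is the natural move: by \Cref{thm:holevo-helstrom}, lower bounding $\TraceDist(\rho^{\otimes n}, \sigma^{\otimes n})$ amounts to upper bounding the optimal error probability of distinguishing the two $n$-fold states, and fidelity is the quantity that behaves cleanly under taking many independent copies.

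First I would assemble two ingredients. The fidelity $F$ is multiplicative across tensor products, so $F(\rho^{\otimes n}, \sigma^{\otimes n}) = \mparen{F(\rho, \sigma)}^{n}$; this is the only place the tensor-power structure enters. Second, I would use the (lower) Fuchs--van de Graaf inequality $\TraceDist(\tau, \omega) \ge 1 - \sqrt{F(\tau, \omega)}$, valid for all states. Applying it with $\tau = \rho^{\otimes n}$ and $\omega = \sigma^{\otimes n}$ and substituting multiplicativity gives
\[
  \TraceDist(\rho^{\otimes n}, \sigma^{\otimes n}) \ge 1 - \sqrt{\mparen{F(\rho, \sigma)}^{n}} = 1 - \mparen{F(\rho, \sigma)}^{n/2}.
\]
Hence it suffices to show $\mparen{F(\rho, \sigma)}^{n/2} \le \exp\mparen{-n\,\TraceDist(\rho, \sigma)/2}$, which after taking $n$-th roots is exactly the single-copy bound $\sqrt{F(\rho, \sigma)} \le \exp\mparen{-\TraceDist(\rho, \sigma)/2}$.

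The remaining ingredient, and the step I expect to be the main obstacle, is precisely this per-copy estimate $\sqrt{F(\rho, \sigma)} \le \exp\mparen{-\TraceDist(\rho, \sigma)/2}$ relating the square-root fidelity to the trace distance. The natural approach is to bound the fidelity above by a function of the trace distance --- the content behind \Cref{fact:tdnf} --- and then linearize through the elementary inequality $1 + x \le e^{x}$. The subtle point is the \emph{rate}: obtaining an exponent linear in $\TraceDist(\rho, \sigma)$ requires the sharpest form of the fidelity--trace-distance relation, since the crude estimate $F(\rho, \sigma) \le 1 - \TraceDist(\rho, \sigma)^{2}$ only produces a quadratic exponent. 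I would therefore concentrate the effort on pinning down this single-copy inequality with the stated constant; granting it, the $n$-copy bound follows from the two-line reduction above.
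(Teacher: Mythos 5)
Your route (multiplicativity of fidelity plus Fuchs--van de Graaf) is genuinely different from the paper's, which never touches fidelity: the paper applies the Helstrom measurement (\Cref{thm:holevo-helstrom}) to each copy separately, takes a majority vote over the $n$ outcomes, bounds the error by Hoeffding's inequality, and converts the resulting distinguishing advantage back into a trace-distance bound via Holevo--Helstrom. However, your plan has a genuine gap at exactly the step you flagged as the main obstacle: the single-copy inequality $\sqrt{F(\rho,\sigma)} \le \exp\mparen{-\TraceDist(\rho,\sigma)/2}$ is \emph{false}, so no amount of effort will pin it down. For pure states one has $F = 1 - T^2$ exactly (writing $T = \TraceDist(\rho,\sigma)$ and using the paper's squared-fidelity convention), and then, using $\sqrt{1-T^2} \ge 1 - T^2$ and $e^{-T/2} \le 1 - T/2 + T^2/8$,
\[
\sqrt{F} - e^{-T/2} \;\ge\; \mparen{1 - T^2} - \mparen{1 - \tfrac{T}{2} + \tfrac{T^2}{8}} \;=\; \tfrac{T}{2} - \tfrac{9T^2}{8} \;>\; 0
\]
for all $0 < T < 4/9$. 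So the inequality fails for essentially every pair of nearby pure states; the true single-copy relation is $F \le 1 - T^2$, which, as you correctly observed, only yields the quadratic exponent $\TraceDist(\rho^{\otimes n},\sigma^{\otimes n}) \ge 1 - \exp\mparen{-nT^2/2}$.

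The deeper point is that the quadratic rate is the truth and the stated linear rate is unachievable by any proof: taking pure states with $T = 1/10$ (so $F = 0.99$) gives $1 - \TraceDist(\rho^{\otimes n},\sigma^{\otimes n}) = 1 - \sqrt{1-F^n} \ge F^n/2 \ge \tfrac12 e^{-0.0101\,n}$, which exceeds $e^{-nT/2} = e^{-n/20}$ for all $n \ge 18$, contradicting \Cref{cor:td-amplification} as stated. Nor does the paper's own sketch prove the linear rate: Hoeffding applied to $n$ i.i.d.\ trials with bias $T/2$ bounds the majority-vote error by $\exp\mparen{-nT^2/2}$, so the majority-vote argument actually delivers $\TraceDist(\rho^{\otimes n},\sigma^{\otimes n}) \ge 1 - 2\exp\mparen{-nT^2/2}$ --- again quadratic in $T$. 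In short, your approach, carried out with $F \le 1-T^2$ instead of the unprovable sharper lemma, proves the correct quadratic-rate statement (arguably more cleanly than the paper, with no factor of $2$), and the corollary should be restated with $\TraceDist(\rho,\sigma)^2$ in the exponent. The correction is harmless downstream: the only application is \Cref{lem:qg2commitments}, where taking $n = \lambda/\delta^2$ copies instead of $\lambda/\delta$ keeps $n$ polynomial since $1/\delta$ is.
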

We recall that $\rho^{\otimes n}$ is the state containing $n$-copies of the state represented by $\rho$.
\begin{proof}
  We prove this by building a majority-vote distinguisher.
  Let $D$ be the distinguisher optimally distinguishing $\rho$ from $\sigma$ by Holevo--Helstrom theorem.
  We apply $D$ on each copy, and take the majority vote.
  Corollary follows by applying Hoeffding's inequality and Holevo--Helstrom again.
\end{proof}

\subsection{Quantum algorithms}

A quantum algorithm $A$ is a family of generalized quantum circuits $\{A_n\}_{n \in \mathbb N}$ over a discrete universal gate set (such as $\{ CNOT, H, T \}$). By generalized, we mean that such circuits can have a subset of input qubits that are designated to be initialized in the zero state, and a subset of output qubits that are designated to be traced out at the end of the computation. Thus a generalized quantum circuit $A_n$ corresponds to a \emph{quantum channel}.
When we write $A_n(\rho)$ for some density matrix $\rho$, we mean the output of the generalized circuit $A_n$ on input $\rho$. If we only take the quantum gates of $A_n$ and ignore the subset of input/output qubits that are initialized to zeroes/traced out, then we get the \emph{unitary part} of $A_n$, which corresponds to a unitary operator which we denote by $\hat{A}_n$. The \emph{size} of a generalized quantum circuit is the number of gates in it, plus the number of input and output qubits.

We say that $A = \{A_n\}_n$ is a \emph{quantum polynomial-time (QPT) algorithm} if there exists a polynomial $p$ such that the size of each circuit $A_n$ is at most $p(n)$. We furthermore say that $A$ is \emph{uniform} if there exists a deterministic polynomial-time Turing machine $M$ that on input $1^n$ outputs the description of $A_n$. 

We also define the notion of a \emph{non-uniform} QPT algorithm $A$ that consists of a family $\{(A_n,\rho_n) \}_n$ where $\{A_n\}_n$ is a polynomial-size family of circuits (not necessarily uniformly generated), and for each $n$ there is additionally a subset of input qubits of $A_n$ that are designated to be initialized with the density matrix $\rho_n$ of polynomial length. This is intended to model nonuniform quantum adversaries who may receive quantum states as advice.  %

The notation we use to describe the inputs/outputs of quantum algorithms will largely mimic what is used in the classical cryptography literature. For example, for a state generator algorithm $G$, we write $G_n(k)$ to denote running the generalized quantum circuit $G_n$ on input $\ketbra{k}{k}$, which outputs a state $\rho_k$. 

Ultimately, all inputs to a quantum circuit are density matrices. However, we mix-and-match between classical, pure state, and density matrix notation; for example, we may write $A_n(k,\ket{\theta},\rho)$ to denote running the circuit $A_n$ on input $\ketbra{k}{k} \otimes \ketbra{\theta}{\theta} \otimes \rho$. In general, we will not explain all the input and output sizes of every quantum circuit in excruciating detail; we will implicitly assume that a quantum circuit in question has the appropriate number of input and output qubits as required by context.

We assume that all  parties are  quantum algorithms with (noiseless) quantum communication.  Furthermore,  all algorithms run with the same  security parameter.

A function $f: \mathbb N \to \mathbb R^{\ge 0}$ is negligible, if for any polynomial $p$, $f(\lambda) \le 1/p(\lambda)$ for all sufficiently large $\lambda \in \mathbb N$.
Otherwise, we say it is noticeable, or equivalently when it is infinitely often at least $1/p(\lambda)$.

\begin{definition}[Computational indistinguishability]
  For two families of mixed states $\{\rho_\lambda\}_\lambda, \{\sigma_\lambda\}_\lambda$, we say that they are computationally indistinguishable (against $\cBQPqpoly$), if for any QPT algorithm $D$, there exists a negligible function $\varepsilon$ such that for any security parameter $\lambda$ and advice state $\alpha = (\alpha_\lambda)_\lambda$,
  \[ \abs{\Pr[D(\alpha_\lambda, \rho_\lambda) = 1] - \Pr[D(\alpha_\lambda, \sigma_\lambda) = 1]} \le \varepsilon(\lambda). \]
\end{definition}
Extending the standard cryptographic convention,the above definition considers  adversaries with non-uniform quantum advice.
The definition  can be adapted to the uniform setting by simply requiring fixing $\alpha = \bot$.
We note that the reductions shown in this work are all uniform, or ``advice preserving'': given an adversary with some advice, the generated adversary uses the same advice.

\section{EFI pairs of states}

We define the main object considered in this work, namely pairs of efficiently generatable mixed quantum states that are statistically far and yet computationally indistinguishable:
 
\begin{definition}
  We call $\xi = (\xi_{b, \lambda})$ a pair of EFI states if it satisfies the following criteria:
  \begin{enumerate}
    \item \textbf{Efficient generation}: There exists a uniform QPT quantum algorithm $A$ that on input $(1^\lambda, b)$ for some integer $\lambda$ and $b \in \{0, 1\}$, outputs the mixed state $\xi_{b, \lambda}$.
    \item \textbf{Statistically Far}: $\TraceDist\mparen{\xi_{0, \lambda}, \xi_{1, \lambda}}$ as a function of $\lambda$ is at least inverse polynomial.
    \item \textbf{Computational Indistinguishability}: $(\xi_{0, \lambda})_\lambda$ is computationally indistinguishable to $(\xi_{1, \lambda})_\lambda$.
  \end{enumerate}
\end{definition}

Here we require exact generation of the mixed state.
Since we only care about the existence of such object, this requirement does not make a difference.
In particular, if we can approximately synthesize a certain family of states with inverse-exponential fidelity, then taking the output of the circuit directly would also satisfy the requirement.

\section{Commitments and semi-honest oblivious transfer}

\subsection{Commitments}

In this work, we without loss of generality focus on the canonical form of quantum commitment schemes \cite[Definition 5]{Yan22}.
A commitment scheme consists of two phases.
In the commitment phase of a canonical commitment scheme, Alice (the committer) chooses a bit $b$, and runs a uniform QPT circuit $Q_{\lambda, b}$ on all zeroes, which outputs two registers $\mathsf C, \mathsf R$; she then proceeds to send the register $\mathsf C$ to Bob (the receiver).
Later in the reveal phase, Alice sends the other register $\mathsf R$ and the bit $b$ to Bob; Bob accepts the opening if he performs $Q_{\lambda, b}^\dagger$ on two registers and measures all zeroes in the computational basis.
We now recall the requirements on the commitment schemes, specialized to canonical forms for convenience.
It will be convenient to define the commitment message $\rho_{\lambda, b} := \Tr_{\mathsf{R}}(Q_{\lambda, b}\ketbra00Q_{\lambda, b}^\dagger)$.

\begin{definition}[Computational hiding]
  A commitment scheme satisfies computational hiding, if $\rho_{\lambda, 0}$ is computationally indistinguishable to $\rho_{\lambda, 1}$.
\end{definition}

We are going to consider a specific more restricted variant of statistical binding called honest binding.
We refer the readers to related works \cite{YWLQ15,FUYZ20,morimae2021quantum,Yan22} for a more thorough discussion on this variant (and how it is equivalent to statistical binding for canonical commitment schemes).

\begin{definition}[Honest binding]
  A canonical commitment scheme satisfies honest computational (resp.\ statistical) binding if for any auxiliary state $\ket\psi$ and any polynomial-time (resp.\ physically) realizable unitary $U$, we have that
  \[ \norm{\paren{Q_{\lambda, 1}\ketbra00_{\mathsf{CR}}Q_{\lambda, 1}^\dagger \otimes I_{\mathsf Z}} \paren{I_{\mathsf C} \otimes U_{\mathsf{RZ}}} \paren{Q_{\lambda, 0}\ket0_{\mathsf{CR}} \otimes \ket\psi_{\mathsf Z}}}_2 \]
  is negligible.
\end{definition}

\subsection{Oblivious transfer and semi-honest adversaries}

In an oblivious transfer protocol, Bob (the sender) chooses two bits $x_0, x_1$ to send to Alice, and Alice (the receiver) chooses the bit $b$ to receive.
At the end of the protocol, Alice is able to recover $x_b$.
Here, we assume the protocol is able to transmit $x_b$ with probability 1.

Here, we say a (quantum) party is semi-honest (or secure against purified adversaries, analogous to the classical honest-but-curious security), if they follow the protocol (without abort) except that they can purify (without loss of generality) all measurements.
We in addition also allow Bob (the sender) to purify his randomness for $x_0, x_1$ if he was to sample them randomly; on the other hand, we require Alice (the receiver) to specify a classical input to make it easier to define security.
(Looking ahead, this is also needed to invoke the semi-honest inefficient attack \cite{CGS16}.)
At the end, they output their residual state as their view for the distinguisher as usual.

\begin{definition}[Statistical security against semi-honest Alice]
  We say an oblivious transfer protocol is $P_A^*$-secure against semi-honest Alice, if for every bits $b, c$, at the end of the protocol with Alice's input being $b$, a semi-honest Alice's view when $x_b = c$ and $x_{1 - b} = 0$ is at most $P_A^*$-close to that when $x_b = c$ and $x_{1 - b} = 1$ in trace distance.
\end{definition}
\begin{definition}[Statistical security against semi-honest Bob]
  We say an oblivious transfer protocol is $P_B^*$-secure against semi-honest Bob, if for every possible (purified) Bob's inputs (meaning an arbitrary bipartite quantum state where the first part is a qubit indicating the input choice bit), at the end of the protocol, a semi-honest Bob's view when $b = 0$ is at most $P_B^*$-close to that when $b = 1$ in trace distance.
\end{definition}

Computational security against semi-honest Alice and Bob can be similarly defined, except considering these two views to be computationally indistinguishable instead of statistically indistinguishable.

We recall the impossibility due to Chailloux, Gutoski, and Sikora showing that oblivious transfer protocols that are statistically secure against both parties do not exist.
While the ``semi-honest'' definition they have is different from here, we could open the proof and check that the cheating strategies constructed there are indeed semi-honest according to our definition.

\begin{theorem}[{\cite[Theorem 1.1]{CGS16}}]
  \label{thm:sot-impossibility}
  For any oblivious transfer protocol, it holds that $2P_B^* + P_A^* \ge 2$, where Alice chooses the choice bit uniformly at random (classically) and Bob chooses the two bits as uniform superposition ($\frac12\paren{\ket{00} + \ket{01} + \ket{10} + \ket{11}}$), and $P_A^*$ is the best probability that a semi-honest Alice is able to predict Bob's choice correctly, and $P_B^*$ is the best probability that a semi-honest Bob is able to predict both bits being sent by Alice correctly.
\end{theorem}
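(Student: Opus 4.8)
The plan is to run the protocol coherently (the purification paradigm) and convert each party's inability to cheat into a statement about the indistinguishability of reduced states, to which \Cref{thm:holevo-helstrom} and Uhlmann's theorem apply. Following the reading in the technical overview, I take $P_B^*$ to be the sender Bob's best probability of predicting the receiver's choice bit $b$, and $P_A^*$ to be the receiver Alice's best probability of predicting the sender's undelivered bit $x_{1-b}$. Fix the inputs as in the statement: $b$ uniform and Bob's two input bits prepared in $\ket{+}^{\otimes 2}$. For each $b \in \{0,1\}$ let $\ket{\phi_b}_{\mathsf{AB}}$ be the joint pure state held by Alice ($\mathsf A$) and Bob ($\mathsf B$) at the end of an honest, fully purified execution, and let $\rho_b := \Tr_{\mathsf A}\ketbra{\phi_b}{\phi_b}$ be Bob's residual state. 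A semi-honest Bob's optimal guess of $b$ is exactly a distinguishing task between $\rho_0$ and $\rho_1$, so by \Cref{thm:holevo-helstrom} we get $P_B^* = \frac12\paren{1 + \TraceDist(\rho_0,\rho_1)}$, i.e.\ $\TraceDist(\rho_0,\rho_1) = 2P_B^* - 1$.

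To fix intuition I would first handle the extreme case: if $P_B^* = \frac12$ then $\rho_0 = \rho_1$, and Uhlmann's theorem supplies a unitary $V$ acting only on Alice's register with $(V \otimes I)\ket{\phi_0} = \ket{\phi_1}$. A semi-honest Alice, who has already recovered $x_b$ by perfect correctness, can apply $V$ to her residual view and then run the honest receiver's recovery for choice bit $1-b$, thereby also learning $x_{1-b}$ with certainty; hence $P_A^* = 1$ and $2P_B^* + P_A^* = 2$ (this is essentially Lo's no-go). For the general bound, a small value of $2P_B^*-1$ forces $\rho_0,\rho_1$ to have large fidelity by the Fuchs--van de Graaf inequalities (the companion of \Cref{fact:tdnf}), and Uhlmann again yields a local Alice-unitary $V$ making $(V \otimes I)\ket{\phi_0}$ have overlap at least $1 - \TraceDist(\rho_0,\rho_1) = 2 - 2P_B^*$ with $\ket{\phi_1}$. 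Applying $V$ followed by the honest choice-$(1-b)$ recovery to her residual view then predicts $x_{1-b}$ with success probability controlled by this overlap, giving a lower bound on $P_A^*$ in terms of $P_B^*$.

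The main obstacle is extracting the \emph{sharp} linear tradeoff $2P_B^* + P_A^* \ge 2$ rather than a weaker relation: the naive transform-then-measure estimate loses a square, giving only $P_A^* \gtrsim (2-2P_B^*)^2$, which agrees with the claim at the endpoints but is too weak in between. Closing this gap is where I expect the real work to be, and I would follow \cite{CGS16} in replacing the one-sided estimate by a global optimization that relates the two cheating probabilities simultaneously through the fidelities of \emph{both} parties' reduced states (e.g.\ a fidelity-alternation or semidefinite-programming duality argument). The second thing to verify --- and the part actually needed to invoke the theorem in our framework --- is that these cheating strategies meet our \emph{semi-honest} definition: both parties run the prescribed protocol to the end without aborting and deviate only by purifying their measurements, with each attack being pure local post-processing of the honest residual view (Bob's is a Helstrom measurement on $\rho_b$; Alice's is the local unitary $V$ followed by the honest recovery). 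Since no mid-protocol deviation is required, semi-honest compliance should be immediate, but I would still check that the coherent $\ket{+}^{\otimes 2}$ encoding of Bob's inputs and the perfect-correctness recovery are handled exactly, so that ``predicting the other bit'' is well-defined as agreement with the computational-basis measurement of Bob's input register.
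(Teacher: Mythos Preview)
The paper does not prove this theorem; it is quoted from \cite{CGS16}, and the only thing the paper adds after the statement is a short paragraph recalling the two cheating strategies so as to check that they meet the paper's semi-honest definition (run the honest protocol to completion, purify, then post-process locally). On that front your proposal is aligned with the paper: you correctly isolate the semi-honest compliance check as the thing that actually needs to be done here, and your discussion of why pure local post-processing suffices matches what the paper says.

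Where your sketch diverges from \cite{CGS16} and from the paper's recall is Alice's attack. You describe Lo's attack: take the Uhlmann unitary $V$ sending $\ket{\phi_0}$ close to $\ket{\phi_1}$, apply it, then run the honest recovery for the other choice bit. As you yourself note, this loses a square and does not deliver the linear tradeoff. The strategy the paper recalls from \cite{CGS16} is different and more direct: Alice runs the protocol with a uniformly random $b$, gently measures $x_b$, and then applies the \emph{Helstrom measurement} on her residual view to guess $x_{1-b}$ --- no Uhlmann unitary, no ``switch the choice and re-recover.'' The paper then simply cites \cite{CGS16} for the fact that this specific pair of strategies already satisfies $2p_B + p_A \ge 2$; the sharp inequality is obtained by analyzing these two Helstrom problems together through the common purified state, not via an SDP duality or a fidelity-alternation argument as you speculate. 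So ``closing the gap'' is not a refinement layered on top of your Uhlmann route but a different Alice strategy from the outset.

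For the purposes of this paper the distinction does not matter: both the Uhlmann attack you describe and the Helstrom attack the paper recalls are post-processing on the honest residual view after a completed, non-aborting execution, and hence semi-honest in the paper's sense, which is all that is being asserted here.
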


We briefly recall the cheating strategies constructed in their proof.
Alice's strategy is the following \cite[Section 2.1]{CGS16}: she randomly chooses $b$ and then follows the protocol semi-honestly according to our definition; at the end, she performs a gentle measurement to learn $x_b$ (it is gentle since by completeness she is supposed to be able to learn $x_b$ with almost certainty), and then performs the Helstrom measurement (\Cref{thm:holevo-helstrom}) to learn $x_{1 - b}$.
Similarly, Bob's strategy is the following \cite[Section 2.2]{CGS16}: he follows the protocol semi-honestly, purifying all measurements including the uniform sampling of $x_0, x_1$; at the end, he performs a post-processing to try to guess $b$.
Let the success probability of Alice and Bob be $p_a, p_b$ respectively.
They establish that for these two strategies, $2p_b + p_a \ge 2$, and thus the theorem follows.

\subsection{Equivalence theorem}
\label{sec:main-equivalence}

\ifitcsproc
The following theorem is proven in \cite{BCL22}:
\fi
\begin{theorem}
  \label{thm:equivalence}
  The following assumptions are equivalent.
  \begin{enumerate}
    \item Existence of EFI states.
    \item Existence of statistically binding (canonical-form) commitment schemes.
    \item Existence of commitment schemes.
    \item Existence of semi-honest oblivious transfer.
  \end{enumerate}
\end{theorem}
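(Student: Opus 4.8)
The plan is to prove the cyclic chain of implications $1 \Rightarrow 2 \Rightarrow 3 \Rightarrow 4 \Rightarrow 1$, which establishes the equivalence of all four statements. The implication $1 \Rightarrow 2$ follows the observation of Yan~\cite{Yan22}: given an EFI pair $\xi = (\xi_{b,\lambda})$ produced by a QPT algorithm $A$, I purify $A$ into a unitary $Q_{\lambda,b}$ acting on all zeroes, and split its output into the commitment register $\mathsf C$ carrying $\xi_{b,\lambda}$ and a purifying register $\mathsf R$. Computational indistinguishability of $\xi_{0,\lambda}$ and $\xi_{1,\lambda}$ is exactly computational hiding of the resulting canonical commitment, while statistical farness yields honest binding: by Uhlmann's theorem the honest-binding quantity is governed by the fidelity $F(\xi_{0,\lambda}, \xi_{1,\lambda})$, and by \Cref{fact:tdnf} a trace distance exponentially close to $1$ forces this fidelity, and hence the honest-binding quantity, to be negligible. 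Since EFI only guarantees inverse-polynomial trace distance, I first amplify by replacing each $\xi_{b,\lambda}$ with $\xi_{b,\lambda}^{\otimes n}$ for a suitable polynomial $n$; by \Cref{cor:td-amplification} this drives the trace distance to $1 - \mathrm{negl}(\lambda)$ while preserving computational indistinguishability via a hybrid argument. The implication $2 \Rightarrow 3$ is immediate, since a statistically binding commitment is in particular a commitment.

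For $3 \Rightarrow 4$ I construct semi-honest OT from an arbitrary (possibly only computationally binding and hiding) commitment via the committed-measurement protocol. The sender on input $x_0, x_1$ sends $\ket{x_0} \otimes H\ket{x_1}$; the receiver with choice bit $b$ measures in the standard basis (if $b=0$) or the Hadamard basis (if $b=1$), thereby recovering $x_b$ with certainty together with a uniformly random bit in the opposite slot, and then commits both recovered bits back to the sender. Correctness is immediate. Security against a semi-honest sender follows from computational hiding: the sender's entire view is a commitment to the receiver's two measured bits, so the two views for $b=0$ and $b=1$ are computationally indistinguishable. Security against a semi-honest receiver is the crux and the step I expect to be the main obstacle: a purified receiver could in principle uncompute its measurement of the incompatible qubit and recover $x_{1-b}$, but having committed to the measured bits, Yan's computational collapse theorem~\cite{Yan22} guarantees that the commitment computationally collapses the committed register even if it is never opened, so the purified receiver's view with $x_{1-b}=0$ is computationally indistinguishable from that with $x_{1-b}=1$. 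The difficulty is precisely that in the merely computationally binding regime one cannot invoke an information-theoretic collapse, and must instead route the argument through a reduction to binding.

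Finally, for $4 \Rightarrow 1$ I turn the two semi-honest bit-extraction tasks into a distinguishing task. Consider two efficiently computable views: $A_c$, the semi-honest receiver's view (with the sender's bits and the receiver's choice bit drawn as in \Cref{thm:sot-impossibility}) conditioned on the sender's other bit equalling $c$; and $B_c$, the semi-honest sender's view conditioned on the receiver's choice bit equalling $c$. I define $\xi_{c,\lambda} := A_c \otimes B_c$, which is efficiently generatable since the OT protocol is efficient. Computational indistinguishability of $\xi_{0,\lambda}$ and $\xi_{1,\lambda}$ follows by a hybrid argument from computational security against semi-honest Alice ($A_0$ indistinguishable from $A_1$) and against semi-honest Bob ($B_0$ indistinguishable from $B_1$). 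For statistical farness, since the other bit and the choice bit are uniform in the \Cref{thm:sot-impossibility} setup, \Cref{thm:holevo-helstrom} gives $P_A^* = \tfrac12\paren{1 + \TraceDist(A_0, A_1)}$ and $P_B^* = \tfrac12\paren{1 + \TraceDist(B_0, B_1)}$; substituting into $2P_B^* + P_A^* \ge 2$ yields $2\TraceDist(B_0, B_1) + \TraceDist(A_0, A_1) \ge 1$, so at least one of the two trace distances is at least $\tfrac13$. Because trace distance is monotone under the partial trace, discarding the other register gives $\TraceDist(\xi_{0,\lambda}, \xi_{1,\lambda}) \ge \max\paren{\TraceDist(A_0, A_1), \TraceDist(B_0, B_1)} \ge \tfrac13$, the required (in fact constant) inverse-polynomial farness. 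This yields an EFI pair and closes the cycle.
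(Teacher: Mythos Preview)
Your proposal is correct and follows essentially the same cycle $1\Rightarrow2\Rightarrow3\Rightarrow4\Rightarrow1$ as the paper, using the same constructions (purified EFI as canonical commitment with tensor-power amplification; the committed-measurement OT with Yan's computational collapse theorem; and the tensor product of the receiver's and sender's semi-honest views indexed by the hidden bit). Your treatment of $4\Rightarrow1$ is slightly more streamlined than the paper's case split on which of $P_A^*,P_B^*$ exceeds $2/3$: you substitute the Holevo--Helstrom identities directly into $2P_B^*+P_A^*\ge2$ to get $2\TraceDist(B_0,B_1)+\TraceDist(A_0,A_1)\ge1$, yielding the constant lower bound $1/3$ in one line, but the construction and the logic are the same.
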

\ifitcsproc\else
\begin{proof}
  $1 \Rightarrow 2$ is shown in \Cref{lem:qg2commitments}.
  $3 \Rightarrow 4$ is shown in \Cref{lem:comm2sot}.
  $4 \Rightarrow 1$ is shown in \Cref{lem:sot2qg}.
  Finally, $2 \Rightarrow 3$ is trivial.
\end{proof}

\begin{lemma}
  \label{lem:qg2commitments}
  Assuming the existence of pairs of EFI states, there exists statistically binding commitments.
\end{lemma}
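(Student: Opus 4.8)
The plan is to instantiate the canonical‑form commitment scheme directly from the purified EFI generator, and then to amplify the EFI pair so that its merely inverse‑polynomial statistical distance becomes negligible honest‑binding error. Let $A$ be the QPT generator of the EFI pair $\xi = (\xi_{b,\lambda})$ and let $\hat A$ denote its unitary part. Running $\hat A$ on input $b$ (with all‑zero ancillas) produces a pure state on two registers: an output register $\mathsf C$ and the register $\mathsf R$ that $A$ would have traced out, such that $\Tr_{\mathsf R}$ of this pure state equals $\xi_{b,\lambda}$. Taking $Q_{\lambda,b}$ to be exactly this unitary part, the commitment message is $\rho_{\lambda,b} = \xi_{b,\lambda}$, so computational hiding of the commitment is literally the computational indistinguishability of the EFI pair.

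For honest binding, I would first identify the worst‑case binding amplitude with a fidelity. Writing $\ket{\psi_b} = Q_{\lambda,b}\ket0_{\mathsf{CR}}$ for the two purifications of $\rho_{\lambda,b}$ on $\mathsf C$, the binding norm is $\norm{(\ketbra{\psi_1}{\psi_1}_{\mathsf{CR}}\otimes I_{\mathsf Z})(I_{\mathsf C}\otimes U_{\mathsf{RZ}})(\ket{\psi_0}_{\mathsf{CR}}\otimes\ket\psi_{\mathsf Z})}_2$, and its supremum over all physically realizable $U$ and all auxiliary $\ket\psi$ is, by Uhlmann's theorem, exactly the maximal overlap between purifications of $\rho_{\lambda,0}$ and $\rho_{\lambda,1}$, i.e.\ their fidelity $\sqrt{F(\rho_{\lambda,0},\rho_{\lambda,1})}$ (the auxiliary register $\mathsf Z$ only enlarges the purifying system and so cannot exceed the fidelity). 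Hence honest statistical binding holds if and only if $F(\rho_{\lambda,0},\rho_{\lambda,1})$ is negligible.

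A single copy does not suffice: the EFI guarantee only gives $\TraceDist(\xi_{0,\lambda},\xi_{1,\lambda})\ge 1/p(\lambda)$ for some polynomial $p$, and by \Cref{fact:tdnf} this merely bounds the fidelity away from $1$ by an inverse‑polynomial amount, leaving the binding error close to $1$. The fix is to amplify before building the commitment: set $\xi'_{b,\lambda} = \xi_{b,\lambda}^{\otimes n}$ for a suitable polynomial $n = n(\lambda)$, which is still efficiently generatable by running $A$ a total of $n$ times. By \Cref{cor:td-amplification}, $\TraceDist(\xi'_{0,\lambda},\xi'_{1,\lambda}) \ge 1 - \exp(-n/(2p))$, so choosing $n = \lambda\, p$ and applying \Cref{fact:tdnf} again forces $F(\xi'_{0,\lambda},\xi'_{1,\lambda})$ to be negligible, which gives negligible honest statistical binding for the commitment built from the purified generator of $\xi'$. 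Computational indistinguishability of $\xi'_{0,\lambda}$ and $\xi'_{1,\lambda}$ — hence computational hiding of the amplified commitment — follows from the single‑copy indistinguishability by a standard advice‑preserving hybrid argument over the $n = \mathrm{poly}(\lambda)$ copies.

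I expect the main obstacle to be the honest‑binding step rather than the amplification: one must correctly match the worst‑case cheating amplitude to the fidelity of the commitment messages via Uhlmann's theorem, taking care that the workspace register $\mathsf Z$ grants the adversary nothing beyond what the purifying register $\mathsf R$ already allows. The statistical‑distance amplification via \Cref{cor:td-amplification} and the hybrid argument for hiding are routine, so the real content lies in the fidelity/binding correspondence together with the bookkeeping that the amplified pair remains a legitimate, efficiently generatable EFI pair.
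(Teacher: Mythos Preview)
Your proposal is correct and follows essentially the same route as the paper: purify the EFI generator to obtain a canonical-form commitment, argue hiding from indistinguishability, and identify the honest-binding amplitude with the fidelity via Uhlmann so that amplifying the EFI pair (via \Cref{cor:td-amplification} and a hybrid for hiding) drives the fidelity to negligible. The only step you leave implicit that the paper makes explicit is the final upgrade from honest binding to full statistical binding for canonical-form schemes, which is handled by citing Yan's equivalence \cite[Theorems~1 and~2]{Yan22}.
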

\begin{proof}
  Let $A_0, A_1$ be the two quantum channels that generate two parts of the EFI state pair respectively.
  Without loss of generality, we assume that their statistical distance is negligibly close to $1$, in particular, at least $1 - e^{-\lambda/2}$ --- by \Cref{cor:td-amplification}, let $\delta = \TraceDist(A_0(1^\lambda), A_1(1^\lambda))$, then taking $\lambda/\delta$ copies of their outputs suffices since $1/\delta$ by assumption is polynomial; on the other hand, computational indistinguishability still holds by a straightforward hybrid argument.
  
  Let the unitary part of $A_i(1^\lambda)$ be $\hat A_i$ for $i = 0, 1$, which acts on registers $\mathsf{CR}$ where the output register is $\mathsf C$ and the auxiliary register is $\mathsf R$.
  The construction of canonical bit commitment scheme is simply running the unitary part as specified above.
  
  It is easy to see that this scheme satisfies computational hiding since by construction $A_b(1^\lambda) = \Tr_{\mathsf{A}}\mparen{A_b\ketbra00A_b^\dagger}$ and thus Bob after the commitment phase sees exactly $A_b(1^\lambda)$.
  On the other hand, since by \Cref{fact:tdnf},
  \[ F\mparen{A_0(1^\lambda), A_1(1^\lambda)} \le \sqrt{1 - \paren{\TraceDist\mparen{A_0(1^\lambda), A_1(1^\lambda)}}^2} \le \sqrt2 e^{-\lambda/4} \]
  is negligible, it also satisfies honest-binding by Uhlmann's theorem.
  Finally, it is known that honest binding for canonical form commitment schemes do imply the more general statistical binding property \cite[Theorems 2 and 3]{Yan22}.
\end{proof}

To construct semi-honest oblivious transfer from commitments, we need the following lemma by Yan, which is originally developed to use computational binding property of a canonical form commitment in order to construct statistically binding quantum commitments from statistically hiding quantum commitments.
On a high level, it reduces the collapsing property of an honest commitment to the computational binding property.
\begin{lemma}[Computational collapse theorem {\cite[Theorem 8]{Yan22}}]
  Let $Q$ a canonical computationally binding quantum bit commitment scheme.
  Then for every polynomial $m$, and every normalized quantum state $\sum_{s \in \{0, 1\}^m}\alpha_s\ket s\ket{\psi_s}$ (of polynomial length), every projector $\Pi$,
  \[
    \abs{\ \norm{\Pi\sum_{s \in \{0, 1\}^m} \alpha_s \ket s \paren{Q_s\ket0}_{\mathsf{CR}^{\otimes m}}\ket{\psi_s}}^2 - \sum_{s \in \{0, 1\}^m} \norm{\alpha_s\Pi\ket s \paren{Q_s\ket0}_{\mathsf{CR}^{\otimes m}}\ket{\psi_s}}^2}
  \]
  is negligible, given that $\Pi$ does not act on the registers $\mathsf{C}^{\otimes m}$ and is efficient.
\end{lemma}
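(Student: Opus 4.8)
\emph{Plan.} Write $\rho = \ketbra{\Phi}{\Phi}$ for the input state $\ket{\Phi} = \sum_{s}\alpha_s\ket{s}\mparen{Q_s\ket0}_{\mathsf{CR}^{\otimes m}}\ket{\psi_s}$, abbreviate the branches as $\ket{\phi_s} := \ket{s}\mparen{Q_s\ket0}_{\mathsf{CR}^{\otimes m}}\ket{\psi_s}$, and let $\mathcal{D}$ denote the channel that dephases the control register holding $s$ in the computational basis. Since distinct branches carry orthogonal labels $\ket s$, we have $\mathcal{D}(\rho) = \sum_s\abs{\alpha_s}^2\ketbra{\phi_s}{\phi_s}$, so the quantity to be bounded is exactly $\abs{\Tr\mbracket{\Pi\rho} - \Tr\mbracket{\Pi\mathcal{D}(\rho)}} = \abs{\sum_{s\neq s'}\alpha_s^*\alpha_{s'}\bra{\phi_s}\Pi\ket{\phi_{s'}}}$, i.e.\ the advantage with which the efficient two-outcome measurement $\{\Pi, I-\Pi\}$ detects the coherence across $s$. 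The plan is to show that detecting this coherence would contradict honest computational binding.

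\emph{Warm-up ($m=1$).} Here the difference is $2\,\mathrm{Re}\mparen{\alpha_0^*\alpha_1\bra{\phi_0}\Pi\ket{\phi_1}}$, so it suffices to prove $\abs{\bra{\phi_0}\Pi\ket{\phi_1}}$ negligible. Writing $\ket{v_b} = Q_{\lambda,b}\ket0_{\mathsf{CR}}$, we have $\ket{\phi_b} = \ket b_{\mathsf S}\ket{v_b}_{\mathsf{CR}}\ket{\psi_b}_{\mathsf Z}$; since $\ket{\phi_1}$ lies in the range of $\ketbra{v_1}{v_1}_{\mathsf{CR}}\otimes I$ and $\Pi$ avoids $\mathsf C$, we get $\norm{\mparen{\ketbra{v_1}{v_1}_{\mathsf{CR}}\otimes I}\Pi\ket{\phi_0}} \ge \abs{\bra{\phi_1}\Pi\ket{\phi_0}} = \abs{\bra{\phi_0}\Pi\ket{\phi_1}}$. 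The left-hand side is precisely the amplitude with which a committer opens an honest commitment to $0$ as a valid commitment to $1$: the committer holds the efficiently-preparable auxiliary state $\ket0_{\mathsf S}\ket{\psi_0}_{\mathsf Z}$, applies the (efficient, since $\Pi$ is) unitary dilation of the measurement $\Pi$ on $\mathsf S,\mathsf R,\mathsf Z$, and then projects onto the opening to $1$. Honest computational binding forces this to be negligible.

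\emph{General $m$.} Introduce hybrids $D_j = \Tr\mbracket{\Pi\,\mathcal{D}_j(\rho)}$, where $\mathcal{D}_j$ dephases the first $j$ control bits; then $D_0$ and $D_m$ are the two terms of interest and $\abs{D_0 - D_m}\le\sum_{j=1}^m\abs{D_{j-1}-D_j}$. Each increment isolates the coherence across a single control bit $s_j$, which — bundling the other coordinates, the control, and $\ket{\psi_s}$ into one auxiliary register and treating the $j$-th commitment as the challenge — one would like to reduce to honest binding at coordinate $j$ exactly as in the warm-up. Since there are only $m = \mathrm{poly}(\lambda)$ coordinates, a sum of negligibles is negligible.

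\emph{Main obstacle.} The difficulty lies entirely in the aggregation, not in the per-coordinate reduction. A term-by-term bound is hopeless: there are up to $2^m\times 2^m$ cross terms, and even if every overlap $\bra{\phi_s}\Pi\ket{\phi_{s'}}$ were negligible, $\sum_{s\neq s'}\abs{\alpha_s}\abs{\alpha_{s'}}$ can be as large as $2^m$ by Cauchy--Schwarz, so the estimate blows up; and once one dephases a control bit, conditioning reintroduces a sum over the exponentially many settings of the remaining bits. The crux is to keep each hybrid step \emph{one-shot}: the honest-binding definition is universally quantified over the committer's auxiliary state and over efficient unitaries, so a single invocation must absorb the entire coherent superposition over the other coordinates rather than costing one invocation per branch. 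Carrying this out — equivalently, converting ``global coherence detected by $\Pi$'' into a \emph{single} pair of conflicting openings to distinct $s\neq s'$ via a gentle-measurement/rewinding argument (in the spirit of Watrous~\cite{Wat02}), and then invoking honest binding at one coordinate where the two openings disagree — is the technical heart of the theorem and the step I expect to be hardest to make lossless.
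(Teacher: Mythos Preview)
The paper does not prove this lemma at all: it is stated verbatim as a citation of \cite[Theorem 7]{Yan22} and then invoked as a black box in the proof of \Cref{lem:comm2sot}. There is therefore no ``paper's own proof'' to compare your proposal against.

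On the proposal itself: your framing is sound --- the quantity is exactly the advantage of $\{\Pi, I-\Pi\}$ in detecting coherence across the control register, the $m=1$ case reduces cleanly to honest binding, and a coordinate-by-coordinate hybrid is the natural decomposition for general $m$. But your write-up is explicitly incomplete: you correctly identify the crux (each hybrid step must be a \emph{single} invocation of honest binding, with the superposition over the remaining $m-1$ coordinates absorbed into the committer's auxiliary state) and then stop short of carrying it out, gesturing instead at a gentle-measurement/rewinding argument. Yan's actual proof does not go through Watrous-style rewinding; the reduction at each hybrid is direct --- the attacker prepares the entire coherent state (efficient because $Q_s$ and $\ket{\psi_s}$ are), treats the $j$-th $\mathsf{CR}$ pair as the honest commitment, and builds the attacking unitary on $\mathsf R_j$ plus everything else from the efficient dilation of $\Pi$. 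The point you are missing is that the universal quantification in the honest-binding definition already lets the auxiliary register carry the full superposition over the other $2^{m-1}$ branches, so no per-branch cost is incurred and the ``aggregation obstacle'' dissolves once the reduction is written down carefully. If you want the details, they are in \cite{Yan22}, not in this paper.
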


\begin{lemma}
  \label{lem:comm2sot}
  Assuming the existence of commitment schemes, there exists a two-message semi-honest quantum oblivious transfer.
\end{lemma}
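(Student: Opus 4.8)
The plan is to construct the \emph{committed-measurement} oblivious transfer obtained by hardening the folklore Cr\'epeau--Kilian protocol against purification attacks. On input $(x_0,x_1)$ the sender prepares and sends the two qubits $\ket{x_0}\otimes H\ket{x_1}$ (the first bit encoded in the standard basis, the second in the Hadamard basis). On choice bit $c$, the receiver measures \emph{both} qubits in the standard basis if $c=0$ and in the Hadamard basis if $c=1$, thereby recovering $x_c$ with certainty together with a uniformly random bit in the complementary position; it then commits to both outcomes $(m_0,m_1)$ via the canonical-form commitment scheme, sending only the commitment register $\mathsf C$. This is a two-message protocol, and correctness is immediate since the qubit carrying $x_c$ is measured in its own encoding basis. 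Security against a semi-honest sender is equally direct: the only message the sender receives is $\mathsf C$, and although the committed values $(m_0,m_1)$ do correlate with $c$, computational hiding of the commitment guarantees that $\mathsf C$ reveals nothing about $(m_0,m_1)$ and hence nothing about $c$.

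The crux is security against a semi-honest (purified, non-aborting) receiver, where I must show that the receiver's residual view is computationally independent of the bit $x_{1-c}$ it should not learn. Fix $c=0$, so the relevant qubit is $H\ket{x_1}=\tfrac1{\sqrt2}\sum_s(-1)^{x_1 s}\ket s$, and model the purified ``measure-then-commit'' step as the controlled unitary $\ket s_{\mathsf D}\ket0_{\mathsf{CR}}\mapsto\ket s_{\mathsf D}(Q_s\ket0)_{\mathsf{CR}}$ followed by sending $\mathsf C$. Tracing out $\mathsf C$, the receiver's view $\rho_{x_1}$ depends on $x_1$ only through the off-diagonal ($s\neq s'$) terms, which carry the sign $(-1)^{x_1(s+s')}$; the fully dephased state $\tilde\rho$, in which register $\mathsf D$ is measured in the standard basis, keeps only the diagonal terms and is therefore \emph{exactly} independent of $x_1$. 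It then remains to show that $\rho_{x_1}$ is computationally indistinguishable from $\tilde\rho$, and this is precisely what Yan's computational collapse theorem delivers: any efficient two-outcome measurement $\Pi$ available to the receiver acts only on its own registers ($\mathsf D$, $\mathsf R$, and the opening register of the second commitment) and \emph{not} on $\mathsf C$, so the theorem bounds $\abs{\Tr(\Pi\rho_{x_1})-\Tr(\Pi\tilde\rho)}$ by a negligible quantity. Chaining $\rho_0\approx_c\tilde\rho\approx_c\rho_1$ (where $\approx_c$ denotes computational indistinguishability) yields security against the semi-honest receiver.

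The hard part will be this last step, and the obstacle is exactly that we have only computational binding: we cannot claim the committed qubit literally collapses, and a purified receiver who never opens could a priori uncompute the commitment to recover $x_{1-c}$. The collapse theorem circumvents this by reducing any efficiently detectable coherence in $\mathsf D$ to a break of computational binding, so the two hypotheses I must verify are that the distinguishing measurement is efficient and avoids $\mathsf C$ (both hold since $\mathsf C$ is held by the sender after message two) and that committing to both outcomes rather than one does not reintroduce $c$-dependence into the portion of the view the theorem does not govern (it does not, since $\mathsf C$ is traced out of the receiver's own view, and the symmetric case $c=1$ is handled identically in the Hadamard basis). Finally, I would confirm that the protocol satisfies the precise semi-honest conventions used here --- no abort and purification of all measurements, including the sender's sampling of $(x_0,x_1)$ --- the non-aborting requirement being trivially met by a two-message protocol.
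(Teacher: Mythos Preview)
Your proposal is correct and follows essentially the same route as the paper: the identical committed-measurement protocol (Cr\'epeau--Kilian encoding plus commitments to both measurement outcomes), sender security via computational hiding, and receiver security via Yan's computational collapse theorem applied to the committed qubit with the distinguisher acting on everything except the $\mathsf C$ register. Your write-up is in fact slightly more explicit than the paper's in articulating the dephased intermediate state $\tilde\rho$ and the chain $\rho_0\approx_c\tilde\rho\approx_c\rho_1$, but the argument is the same.
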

\begin{proof}
  The semi-honest oblivious transfer scheme is the following.
  The sender, on input $x_0, x_1$, sends $\ket{x_0} \otimes H\ket{x_1}$; in other words, the sender encodes $x_0$ in the standard basis and $x_1$ in the Hadamard basis and send these two qubits to the receiver.
  The receiver measures both qubits in standard basis if $b = 0$, or in Hadamard basis if $b = 1$.
  Let $x_0, x_1$ be the measurement outcomes.
  The receiver commits to both of them using the commitment scheme in canonical form\footnote{We use canonical form only because we want the commitment to be non-interactive. Even if it is an interactive quantum commitment, everything else could still be extended minus the round complexity.}, and outputs $x_b$.
  
  Semi-honest security against semi-honest sender is easy to see, by a simple hybrid argument invoking the hiding property of the commitment twice, replacing each commitment to committing to 0; furthermore, this reduction even extends if the sender chooses an arbitrary purified input.
  Semi-honest security against receiver follows immediately from collapsing.
  Formally, without loss of generality, we assume the choice bit $b = 0$ and thus the receiver should measure in the standard basis.
  As the goal of the adversary is to extract $x_1$, we can for simplicity remove $x_0$ from the view.
  The second qubit (denoted by register $\mathsf X$) the receiver sends is $\frac1{\sqrt2}\paren{\ket0 + \paren{-1}^{x_1}\ket1}$.
  After the purified protocol concludes, the probability of an efficient distinguisher outputting 1 is
  \[
    \norm{\Pi \frac1{\sqrt 2}\sum_{y \in \{0, 1\}} \paren{-1}^{x_1y} \ket{y} \paren{Q_y\ket0}_{\mathsf{CR}}\ket{\psi_y}}^2
  \]
  for some auxiliary states $\ket{\psi_0}, \ket{\psi_1}$ (possibly containing the distinguisher's advice and auxiliary registers), where $\Pi$ denotes the projector for the distinguisher outputting 1 acting on everything but the $\mathsf C$ register.
  By the computational collapse theorem, this is negligibly close to
  \[
    \frac12\sum_{y \in \{0, 1\}}\norm{\Pi \ket{y} \paren{Q_y\ket0}_{\mathsf{CR}}\ket{\psi_y}}^2,
  \]
  which is independent of $x_1$.
  This concludes the proof.
\end{proof}

We remark that we construct semi-honest OT here instead of considering the CLS scheme directly because (1) semi-honest OT suffices for the following lemma but also (2) as observed by Yan~\cite{Yan22}, it is not clear whether the CLS scheme indeed satisfy malicious (indistinguishability) security when instantiated with a commitment scheme that is only computationally binding but not statistically binding nor extractable, due to the difficulties in using computational binding for a quantum commitment scheme.

Also this semi-honest OT protocol fully intentionally ends with a receiver message to erase the information about the other bit from the distinguisher.
For a classical OT, if the last message comes from a receiver, it can always be removed without impacting the scheme (semi-honest or malicious); this is certainly not the case here.

Finally, this protocol has the minimum round complexity for semi-honest OT.
This can be seen from the following argument.
Even with trusted setup, 1-message protocols are impossible: the message has to be sent by the sender as otherwise the receiver cannot recover the output; however, this means that a semi-honest receiver can always extract both bits efficiently via correctness and gentle measurement.
Without trusted setup, 1-round protocols (two parties exchanging a single message simultaneously) are also impossible: this can be seen as the receiver message is useless and thus reduces to the impossibility above.
Finally, if the sender and the receiver share two EPR pairs, then there is a variant of the protocol above that is 1-round: the receiver is the same but acts on his halves of EPR pairs instead of sender's message, and the sender measures his part of EPR pairs, the first qubit in standard basis and the second qubit in Hadamard basis, obtaining $y_0, y_1$, and send $x_0 \oplus y_0, x_1 \oplus y_1$ to the receiver.
The correctness and security of this protocol can be argued with the same proof techniques as above.

\begin{lemma}
  \label{lem:sot2qg}
  Assuming the existence of semi-honest oblivious transfer, there exists a pair of EFI states.
\end{lemma}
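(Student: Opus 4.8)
The plan is to reinterpret the two semi-honest bit-extraction tasks underlying \Cref{thm:sot-impossibility} as a single state-distinguishing task: computational security of \emph{both} parties' views will furnish the computational-indistinguishability half of EFI, while the inefficient attack guaranteed by Chailloux--Gutoski--Sikora furnishes the statistical-farness half. Concretely, I would define two QPT-generatable families of mixed states. Let $V_A^{(\beta)}$ be the residual (purified) view of a semi-honest receiver Alice at the end of the protocol, averaged over a uniformly random choice bit $d$ and uniformly random received bit $x_d$, with the \emph{other} sender bit fixed to $x_{1-d}=\beta$. Let $V_B^{(\beta)}$ be the residual view of a semi-honest sender Bob who prepares both input bits as $\ket{+}^{\otimes 2}$, when the receiver's choice bit is fixed to $\beta$. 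Both are exactly generatable in QPT since they are just honest/semi-honest runs of the efficient OT protocol with one party's register read out as the view. The candidate EFI pair is then $\xi_{\beta,\lambda} := V_A^{(\beta)}\otimes V_B^{(\beta)}$ for $\beta\in\{0,1\}$.

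For computational indistinguishability I would show that $V_A^{(0)}$ is computationally indistinguishable from $V_A^{(1)}$ and that $V_B^{(0)}$ is computationally indistinguishable from $V_B^{(1)}$, and then conclude indistinguishability of $\xi_{0,\lambda}$ from $\xi_{1,\lambda}$ by a two-step hybrid over the tensor factors $V_A^{(0)}\otimes V_B^{(0)} \to V_A^{(1)}\otimes V_B^{(0)} \to V_A^{(1)}\otimes V_B^{(1)}$, using that tensoring with a fixed efficiently-generatable state preserves indistinguishability. The first step is exactly computational security against semi-honest Alice, applied for each fixing of $(d,x_d)$ and then averaged (averaging over an efficiently samplable index preserves indistinguishability); the second is computational security against semi-honest Bob.

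For statistical farness I would pass through \Cref{thm:holevo-helstrom}. The key observation is that the optimal inefficient probability $P_A^*$ of a semi-honest Alice guessing the (uniform) other bit $x_{1-d}$ equals the Helstrom distinguishing probability for $V_A^{(0)}$ versus $V_A^{(1)}$, i.e.\ $P_A^* = \tfrac12\mparen{1+\TraceDist(V_A^{(0)},V_A^{(1)})}$, and likewise $P_B^* = \tfrac12\mparen{1+\TraceDist(V_B^{(0)},V_B^{(1)})}$. Here I would flag the small but essential point that, since Alice never holds Bob's registers, her reduced view is identical whether Bob's input bits are classical-uniform or prepared in the uniform superposition appearing in \Cref{thm:sot-impossibility}; this is what lets my conditioning-on-$\beta$ states match the success probabilities in that theorem. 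Substituting into $2P_B^* + P_A^* \ge 2$ and simplifying gives
\[
  \TraceDist\mparen{V_B^{(0)},V_B^{(1)}} + \tfrac12\TraceDist\mparen{V_A^{(0)},V_A^{(1)}} \ge \tfrac12,
\]
so $\max\mparen{\TraceDist(V_A^{(0)},V_A^{(1)}),\,\TraceDist(V_B^{(0)},V_B^{(1)})}\ge \tfrac13$. Finally, since tracing out either tensor factor is a quantum channel and trace distance is non-increasing under channels, $\TraceDist(\xi_{0,\lambda},\xi_{1,\lambda}) \ge \max\mparen{\TraceDist(V_A^{(0)},V_A^{(1)}),\,\TraceDist(V_B^{(0)},V_B^{(1)})} \ge \tfrac13$, a constant, and in particular inverse-polynomial.

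I expect the main obstacle to be conceptual bookkeeping rather than calculation: ensuring that the states $V_A^{(\beta)},V_B^{(\beta)}$ used to build the EFI pair are \emph{exactly} the states whose Helstrom distances are controlled by $P_A^*,P_B^*$ in \Cref{thm:sot-impossibility}. This is where the superposition-versus-classical-mixture equivalence for Alice's reduced view matters, and where I must verify that the semi-honest (no-abort, purify-everything) conventions under which the \cite{CGS16} cheating strategies are shown to operate coincide with the semi-honest model under which the OT is assumed secure. The combination step itself---tensoring the two views so that whichever party's security is statistically broken contributes farness, while both contribute indistinguishability---is the clean idea that makes the argument succeed without knowing in advance which party's security fails statistically.
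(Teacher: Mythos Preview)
Your proposal is correct and is essentially the same construction and argument as the paper's: both define the EFI pair as the tensor product of Alice's semi-honest view indexed by the unreceived sender bit with Bob's semi-honest view (with superposed inputs) indexed by the choice bit, obtain computational indistinguishability via a two-step hybrid from the two semi-honest security guarantees, and obtain statistical farness from the \cite{CGS16} inequality $2P_B^*+P_A^*\ge 2$. The only cosmetic difference is that the paper case-splits on whether $P_A^*\ge\tfrac23$ or $P_B^*\ge\tfrac23$ to get trace distance at least $\tfrac16$ on one factor, whereas you substitute the Helstrom identities directly into the inequality to get $\TraceDist(V_B^{(0)},V_B^{(1)})+\tfrac12\TraceDist(V_A^{(0)},V_A^{(1)})\ge\tfrac12$ and hence $\ge\tfrac13$ on the max.
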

\begin{proof}
  Let us consider both parties to be semi-honest, i.e.\ purifying all the measurements.
  Then their final state would be given by $U_{x_0, x_1, b}\ket0_{\mathsf{AB}}$, for some efficient unitaries $U_{x_0, x_1, b}$ for $x_0, x_1, b = 0, 1$, and at the end Alice holds register $\mathsf A$ and Bob holds register $\mathsf B$.
  By \Cref{thm:sot-impossibility}, we know that either $P_A^* \ge \frac23$ or $P_B^* \ge \frac23$ for every security parameter $\lambda$.
  In particular, either a semi-honest Alice who chooses the choice bit uniformly at random could achieve $P_A^*$, or a semi-honest Bob who chooses the two input bits to be uniform superposition could achieve $P_B^*$.
  
  Fix any security parameter $\lambda$.
  If $P_A^* \ge \frac23$, we construct EFI generators $G_y$ for $y = 0, 1$ that outputs
  \[ \frac12 \cdot \Tr_{\mathsf{B}}\mparen{\E_x\mbracket{\ketbra00 \otimes U_{x, y, 0}\ketbra00U_{x, y, 0}^\dagger + \ketbra11 \otimes U_{y, x, 1}\ketbra00U_{y, x, 1}^\dagger}}. \]
  Note that $G_y$ is exactly Alice's view when she chooses a random choice bit (and remembers the bit in the first register), and Bob sends $y$ in the other slot not chosen by Alice, and thus computational indistinguishability follows by semi-honest security of oblivious transfer.
  On the other hand, they are statistically far since there exists a distinguisher that achieves advantage at least $\frac16$ by our assumption.

  Otherwise if $P_B^* \ge \frac23$, we construct EFI generators $H_b$ for $b = 0, 1$ that outputs $\Tr_{\mathsf{A}}\mparen{\ketbra{\phi_b}{\phi_b}}$, where $\ket{\phi_b}$ is defined to be
  \[ \frac12\sum_{x, y}\paren{U_{x, y, b}\ket0}_{\mathsf{AB}} \otimes \ket x \ket y. \]
  Note that $H_b$ is exactly Bob's view when Alice wants to choose the slot $b$, and thus computational indistinguishability follows by semi-honest security of oblivious transfer.
  On the other hand, they are statistically far since there exists a distinguisher that achieves advantage at least $\frac16$ by our assumption.
  
  Putting everything together, we get that $G_0 \otimes H_0$ and $G_1 \otimes H_1$ generate EFI pairs: statistical farness follows as at least one of $G_b, H_b$ have trace distance $\ge \frac16$ for all security parameters, and computational indistinguishability follows from a direct hybrid argument.
\end{proof}
\fi
\section{Dichotomy for secure two party computations}
\label{sec:2pc}

In this section, we study secure two party computations with quantum parties for classical functionalities.
A secure two-party computation protocol consists of two (interactive uniform quantum) algorithms $A, B$, where they receive (implicitly)
 the security parameter $\lambda$ and their respective inputs $a, b$, take turns to run and exchange a message register back and forth; in the end, we denote their joint state as $\IP{A, B}(a, b)$.
We can also denote Alice's state to be $\IP{A, B}(a, b)_A$ and Bob's to be $\IP{A, B}(a, b)_B$.
Without loss of generality, we consider the protocol so that only Bob gets the output and otherwise they do not learn any other information \cite[Definition 1]{BeimelMM99}.
In this case, $\IP{A, B}(a, b)$ would simply be $(\bot, z)$ as Alice outputs nothing and Bob outputs the evaluated result $z$.
In particular, this evaluated result could be the output of any efficient quantum channel \cite{DupuisNS12}.
We can also define the output state $\IP{A^*, B}(a, b)$ for a malicious Alice (and analogously for Bob), where the malicious Alice can output anything she wants and not necessarily $\bot$.

We now describe the definition for malicious simulation security.
\begin{definition}[Malicious simulation security]
  Let $f = (f_\lambda)_\lambda$ be a quantum channel computable by a polynomial-size quantum circuit.
  A protocol computing $f$ satisfies malicious simulation security for Alice, if the following holds.
  For any (malicious) QPT algorithm $A^*$, there exists a QPT simulator $S$ such that for any QPT distinguisher $D$, there exists a negligible function $\varepsilon$ such that for all security parameter $\lambda$, non-uniform bipartite advice state $\rho_{AD}$, and Bob's input $b$ (permissible by $f_\lambda$),
  \[
    \abs{\Pr\mbracket{D(\IP{A^*, B}(\rho_A, b), \rho_D) = 1} - \Pr\mbracket{D(S_f(\rho_A, b), \rho_D) = 1}} \le \varepsilon(\lambda),
  \]
  where $S_f(\rho_A, b)$ is the following algorithm:
  \begin{itemize}
    \item The two-stage algorithm $S(1^\lambda, \rho_A)$ is run, which outputs some $a^*$.
    \item Compute $(z_a, z_b) \gets f_\lambda(a^*, b)$ to be the output of $f$. (In our setup, $z_a = \bot$ but $z_b$ is the actual output.)
    \item Finish executing $S$ with input $z_a$, which in the end outputs a certain state $\sigma$.
    \item Output $(\sigma, z_b)$.
  \end{itemize}
  
  Malicious simulation security for Bob can be defined in the same way as above, except exchanging the role of Alice and Bob.
\end{definition}

We say the malicious simulation security is statistical if it holds even against any unbounded algorithms $A^*$ and $D$, and in this case there need not be a running time bound on the simulator.

In this work, we focus on secure two-party computations although the consequences also generalize to secure multi-party computations where possibly more than two parties are involved and all of them could receive outputs.
We refer the readers to the prior work \cite{BartusekCKM21a} for related literature.

Combining our equivalence theorem from before and existing work constructing one-sided statistically secure 2PC from statistically binding (quantum) commitments \cite{BartusekCKM21a,AQY21,WolfW06}, we immediately get the following corollary.

\begin{corollary}
  Assuming EFI state pairs exist, then any $\cPpoly$ functionalities can be computed with full malicious security and one-sided statistical security.
\end{corollary}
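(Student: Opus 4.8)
The corollary asserts that, assuming EFI pairs exist, any $\cPpoly$ functionality can be securely computed with full malicious security and one-sided statistical security. My plan is to simply chain together the equivalence of \Cref{thm:equivalence} with the prior work on maliciously secure two-party computation from statistically binding quantum commitments. Concretely, the argument is a two-step reduction: first go from the assumed EFI pair to a statistically binding canonical-form commitment, and then feed that commitment into the existing MPC compiler of \cite{BartusekCKM21a,AQY21}.

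\textbf{Key steps.} First I would invoke \Cref{thm:equivalence} (equivalently the single implication $1 \Rightarrow 2$ proved in \Cref{lem:qg2commitments}) to obtain, from the assumed pair of EFI states, a statistically binding canonical-form quantum commitment scheme. This step is entirely handled by what precedes the corollary in the excerpt, so nothing new is required. Second, I would cite the chain of transformations in \cite{BartusekCKM21a,AQY21,WolfW06}, which show how to bootstrap statistically binding quantum commitments into a maliciously secure oblivious transfer protocol (via the Bennett--Brassard--Cr\'epeau--Skubiszewska-style committed-measurement approach together with the statistical-receiver-security guarantee), and then from maliciously secure OT into general secure multiparty computation for any classical functionality, with full simulation-based malicious security. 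The crucial feature to track through this chain is the \emph{one-sided statistical security}: because the underlying commitment is statistically binding, the resulting OT (and hence the MPC) inherits statistical security against at least one of the two parties, and this property is preserved by the GMW-style compilation. I would note that the functionality class $\cPpoly$ is exactly the class computable by polynomial-size circuits, which is the class handled by these compilers.

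\textbf{Main obstacle.} Since both constituent results are already established --- \Cref{lem:qg2commitments} in this paper and the MPC construction in the cited prior work --- the proof is essentially a matter of correctly composing black-box reductions and verifying that the security guarantees line up. The only genuinely delicate point, which I would emphasize rather than grind through, is confirming that the notion of statistical binding produced by \Cref{lem:qg2commitments} (via honest binding and the equivalence of \cite[Theorems 1 and 2]{Yan22}) matches the flavor of statistical binding \emph{required as input} by the MPC compilers of \cite{BartusekCKM21a,AQY21}, and that the ``one-sided'' statistical guarantee refers consistently to the same party throughout. Once these interfaces are checked to agree, the corollary follows immediately by composition, so I expect the write-up to be a short citation-level argument rather than a technical proof.
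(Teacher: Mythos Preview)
Your proposal is correct and matches the paper's approach exactly: the paper does not give a standalone proof but simply states (in the sentence preceding the corollary) that it follows by combining the equivalence theorem with the existing constructions of one-sided statistically secure 2PC from statistically binding quantum commitments in \cite{BartusekCKM21a,AQY21,WolfW06}. Your two-step chain (EFI $\Rightarrow$ statistically binding commitments via \Cref{lem:qg2commitments}, then commitments $\Rightarrow$ maliciously secure MPC via the cited compilers) is precisely this, and your attention to the interface between the binding notion produced and the one consumed is a reasonable point to flag, though the paper treats it as routine.
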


For the rest of the section, we show EFI states are also implied by non-trivial 2PC protocols.
For that purpose, we focus on 2PC protocols for finite functionalities.
By ``finite'', we mean that the function to be computed is a fixed-size function independent of the security parameter, say Yao's millionaires' problem.

\begin{definition}[Insecure minor]
  Let $S_1, S_2, S_3$ be finite sets and $f: S_1 \times S_2 \to S_3$ be a (finite) function.
  Then we say $f$ contains an insecure minor, if there exists $x_0, x_1 \in S_1$ and $y_0, y_1 \in S_2$ such that $f(x_0, y_0) = f(x_1, y_0)$ and $f(x_0, y_1) \neq f(x_1, y_1)$.
\end{definition}

\begin{lemma}[{\cite[Claim 1]{BeimelMM99}}]
  \label{lem:trivial2pc}
  If a function $f(\cdot, \cdot)$ does not contain an insecure minor, then there is a classical one-message perfectly secure computation protocol for $f$.
\end{lemma}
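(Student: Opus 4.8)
The plan is to reinterpret the ``no insecure minor'' hypothesis as a rigidity condition on the rows of $f$ and then read off an essentially trivial one-message protocol. For each Alice-input $x \in S_1$ define the row $r_x : S_2 \to S_3$ by $r_x(y) = f(x,y)$. Negating the insecure-minor condition says that for all $x_0, x_1, y_0, y_1$ it is never the case that $f(x_0,y_0) = f(x_1,y_0)$ while $f(x_0,y_1) \neq f(x_1,y_1)$; equivalently, any two rows that agree in a single column agree in every column, i.e.\ two distinct rows $r_{x_0} \neq r_{x_1}$ must satisfy $f(x_0,y) \neq f(x_1,y)$ for every $y$. I would record this as the key structural fact: for every fixed $y \in S_2$, the value $f(x,y)$ determines the entire row $r_x$, since any $x'$ with $f(x',y) = f(x,y)$ has $r_{x'} = r_x$.

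Given this, the protocol is the obvious one, consistent with the paper's convention that only Bob receives output: Alice, on input $x$, sends the single classical message $r_x$ (her whole row, as a table), and Bob, on input $y$, outputs $r_x(y) = f(x,y)$. Correctness is immediate. Bob's privacy is also immediate: Alice sends a message and receives nothing, so her view is a function of $x$ alone and is independent of $y$, hence trivially simulatable without any access to $y$.

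The only substantive point is Alice's privacy against (possibly malicious) Bob, and here the structural fact does all the work. Bob's entire view is the pair $(y, r_x)$ together with the output $f(x,y)$, and by the fact above the row $r_x$ is the unique row $r$ with $r(y) = f(x,y)$. Thus a simulator that is handed Bob's input $y$ and the ideal output value $v = f(x,y)$ can reconstruct $r_x$ exactly---take the unique row consistent with $(y,v)$---and therefore reproduce Bob's view perfectly; no information about $x$ beyond $f(x,y)$ is leaked. For malicious security one also checks the other corruption: a malicious Alice simply sends some table $r^*$, and since each table is the row of some input (or can be canonically decoded to a fixed default input), the simulator extracts a corresponding effective input $x^*$ to submit to the ideal functionality, matching Bob's induced output $r^*(y)$.

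I expect the main obstacle to be purely the combinatorial step, namely verifying the equivalence between ``no insecure minor'' and the row-rigidity statement and, in particular, that a single entry $(y, f(x,y))$ pins down the whole row---everything downstream (correctness and both simulators) is routine once that fact is in hand. A minor care point is the handling of malformed messages from a malicious Alice, which is dispatched by decoding any message to a fixed default input so that the simulator always has a well-defined input to forward.
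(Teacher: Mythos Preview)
The paper does not supply its own proof of this lemma; it simply cites \cite[Claim 1]{BeimelMM99}. Your argument is correct and is precisely the standard one from that reference: the absence of an insecure minor is equivalent to the row-rigidity statement (any two rows agreeing in one column agree everywhere), so for each fixed $y$ the value $f(x,y)$ pins down the entire row $r_x$, and Alice sending $r_x$ as her single message leaks nothing to Bob beyond the ideal output. Your treatment of malicious Alice via a default decoding is the natural completion, though the cited reference is concerned with semi-honest privacy.
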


\begin{lemma}
  \label{lem:nontrivial2pc}
  If a function $f(\cdot, \cdot)$ contains an insecure minor, then we can build an semi-honest OT protocol from an semi-honest secure computation protocol for $f$.
\end{lemma}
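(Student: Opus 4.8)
The plan is to reduce semi-honest OT to semi-honest 2PC for $f$ by isolating the insecure minor inside $f$ and using it to simulate an OT channel. The key observation is that an insecure minor is precisely the classical combinatorial structure that makes $f$ ``as powerful as OT'': there are inputs $x_0, x_1$ for one party and $y_0, y_1$ for the other such that fixing $y_0$ makes the two choices $x_0, x_1$ indistinguishable through $f$ (since $f(x_0, y_0) = f(x_1, y_0)$), while fixing $y_1$ makes them perfectly distinguishable (since $f(x_0, y_1) \neq f(x_1, y_1)$). This is exactly the ``commit-or-reveal'' asymmetry that an OT sender needs: the party controlling the $y$-input can either hide or expose the other party's secret bit encoded in the choice between $x_0$ and $x_1$.

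First I would fix the embedding of bits into the minor: map the OT sender's two bits $(s_0, s_1)$ and the OT receiver's choice bit $c$ into inputs of $f$ so that a single (or constant number of) invocation of the 2PC protocol for $f$ implements one OT. Following \cite{BeimelMM99}, the natural assignment is to have the OT receiver play the party whose input is $\{x_0, x_1\}$ (encoding the choice bit $c$ as which of $x_0, x_1$ it submits), and the OT sender play the party whose input is $\{y_0, y_1\}$ (encoding which of the two messages gets ``revealed''). The exact reduction in \cite{BeimelMM99} composes several calls to $f$ together with one-time-pad masking so that the receiver learns $s_c$ but not $s_{1-c}$, and the sender learns nothing about $c$. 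Since this reduction is \emph{black-box in the 2PC functionality} --- it only invokes the ideal evaluation of $f$ as a subroutine and performs classical pre/post-processing --- I would instantiate it with the quantum 2PC protocol for $f$ as a black box, replacing each ideal call to $f$ by a run of the protocol.

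The main step, and the place requiring genuine verification rather than citation, is checking that the classical semi-honest reduction of \cite{BeimelMM99} remains valid under the \emph{quantum} semi-honest model defined in this paper, where adversaries run the purified protocol to the end without aborting and may purify all measurements (rather than the classical honest-but-curious model). I would argue security for each side separately: semi-honest security against the OT sender should follow because the sender's view is obtained purely by post-processing (possibly purified) the 2PC views under inputs $y_0, y_1$, and semi-honest security of the 2PC for $f$ guarantees these are indistinguishable across the receiver's choice $c$; symmetrically for the receiver using the $y_0$-branch where $f(x_0,y_0)=f(x_1,y_0)$ collapses the distinguishing information. Because purifications and the no-abort constraint only restrict the adversary's behavior on top of a classical-style view, the classical indistinguishability argument lifts, but I would carefully confirm that the one-time-pad masking used in the \cite{BeimelMM99} reduction still perfectly hides the unrevealed message even against a purified view --- this is the one point where the quantum model could in principle differ, and I expect it to go through because the masking is information-theoretic and acts on classical registers that the honest protocol measures.

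The hard part will be the bookkeeping of the semi-honest models: verifying that ``no abort, purify all measurements'' does not give the reduction's adversary extra power beyond what the underlying 2PC's own semi-honest security already tolerates. Concretely, I would trace through that every measurement the reduction relies on being irreversible (e.g.\ the receiver committing to or outputting $s_c$) is either enforced by sending the relevant register to the other party or is protected by the 2PC simulator, so that a purified adversary cannot uncompute to recover $s_{1-c}$ or $c$. Once \Cref{lem:nontrivial2pc} is established, combining it with \Cref{lem:sot2qg} immediately yields EFI pairs from any non-trivial semi-honest 2PC, completing the dichotomy.
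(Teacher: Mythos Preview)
Your high-level plan---invoke the black-box \cite{BeimelMM99} reduction and verify it survives the quantum semi-honest model---matches the paper. But your role assignment is backwards, and this is not cosmetic. In the paper's 2PC convention only the second party (the one supplying the $y$-input) receives the output, so the OT \emph{receiver} must be that party; with your assignment the receiver would learn nothing and correctness fails. The actual construction runs $\Pi_f$ twice: the OT sender (holding bits $a_0,a_1$) supplies $x_{a_0}$ in the first call and $x_{a_1}$ in the second, while the OT receiver (holding choice bit $b$) supplies $y_{1-b}$ and $y_b$ respectively and reads off $a_b$ from the call in which he fed $y_1$. There is no one-time-pad masking anywhere; the insecure-minor equality $f(x_0,y_0)=f(x_1,y_0)$ already hides $a_{1-b}$ information-theoretically in the $y_0$-call.

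Consequently the security verification is simpler than you anticipate. Each party's purified semi-honest view is literally the concatenation of its views from two sequential 2PC executions on fixed classical inputs; the reduction itself performs no measurements, so there is nothing to worry about uncomputing. Sender security is just two applications of 2PC semi-honest security against the first party (her view is indistinguishable across Bob's $y$-inputs, hence across $b$). Receiver security is one application of 2PC semi-honest security against the second party in the $y_0$-call: since the ideal output $f(x_{a_{1-b}},y_0)$ is independent of $a_{1-b}$, so is his entire view. The no-abort, purify-everything model adds no power here because the wrapper around the 2PC boxes is deterministic classical wiring.
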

\begin{proof}
  This essentially follows from the work of Beimel, Malkin, and Micali~\cite[Claim 3]{BeimelMM99}.
  Since the original proof is black-box, it also immediately generalizes to our setting when the parties are quantum.
  As the precise definition of semi-honest is different in our case, we give the proof for completeness.
  
  Let $x_0, x_1, y_0, y_1$ be the values guaranteed by the insecure minor, and let $\Pi_f$ be the semi-honest secure computation protocol for $f$.
  The (semi-honest) oblivious transfer protocol described in that proof works as follows:
  \begin{itemize}
    \item (Recall) Alice gets as input $a_0, a_1$ and Bob gets as input a choice bit $b$.
    \item Execute $\Pi_f$ on input $x_{a_0}, y_{1 - b}$, and Bob gets output $z_0$.
    \item Execute $\Pi_f$ on input $x_{a_1}, y_b$, and Bob gets output $z_1$.
    \item Bob outputs $0$ if $z_b = f(x_0, y_1)$, otherwise Bob outputs $1$.
  \end{itemize}
  Correctness follows directly since the construction is black-box.
  A semi-honest Alice's view only consists of her semi-honest view from two protocol executions, and thus Alice does not learn anything about $b$.
  Similarly, a semi-honest Bob's view only consists of his semi-honest view from two protocol executions, and thus semi-honest security also follows from the property of insecure minor and Alice's privacy against semi-honest Bob for $\Pi_f$.
\end{proof}

Combining this with \Cref{thm:sot-impossibility}, we immediately get the following:
\begin{corollary}
  If a function $f(\cdot, \cdot)$ contains an insecure minor, then $f$ cannot be computed by statistically-secure semi-honest protocols.
\end{corollary}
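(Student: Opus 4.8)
The plan is to argue by contradiction, chaining \Cref{lem:nontrivial2pc} with the Chailloux--Gutoski--Sikora impossibility \Cref{thm:sot-impossibility}. Suppose $f$ contains an insecure minor and yet admits a semi-honest secure computation protocol $\Pi_f$ that is \emph{statistically} secure against both semi-honest parties. Feeding $\Pi_f$ into the reduction of \Cref{lem:nontrivial2pc} produces a semi-honest oblivious transfer protocol. I would then show that this OT protocol is itself statistically secure against both semi-honest parties, which directly contradicts \Cref{thm:sot-impossibility}.

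The key step is to verify that the (black-box) reduction preserves \emph{statistical} security in both directions, not merely computational security. Concretely, in the constructed OT each party's view is just the concatenation of its views across the two invocations of $\Pi_f$ on inputs determined by the insecure minor. Thus a semi-honest Alice's view is a channel applied to her two $\Pi_f$-views, so statistical privacy of $\Pi_f$ against semi-honest Alice (which hides Bob's $\Pi_f$-inputs $y_0, y_1$, hence the order in which they are used, hence the choice bit) transfers to statistical receiver-privacy of the OT; and statistical privacy of $\Pi_f$ against semi-honest Bob, together with the defining relations of the insecure minor ($f(x_0, y_0) = f(x_1, y_0)$ while $f(x_0, y_1) \neq f(x_1, y_1)$), ensures the non-selected bit stays statistically hidden from a semi-honest Bob. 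Since trace distance cannot increase under the post-processing channels that assemble these views, negligible statistical advantage in $\Pi_f$ yields negligible statistical advantage in the OT.

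Finally I would translate this back into the quantities of \Cref{thm:sot-impossibility}. Statistical security against both parties forces the best prediction probabilities $P_A^*$ and $P_B^*$ to be negligibly close to the trivial value $\tfrac12$ (this is exactly the Holevo--Helstrom content of \Cref{thm:holevo-helstrom}: a negligible trace distance between the two relevant views bounds the distinguishing, and hence prediction, advantage). Then $2P_B^* + P_A^*$ is negligibly close to $\tfrac32 < 2$, contradicting $2P_B^* + P_A^* \ge 2$. I expect the main obstacle to be this security-preservation bookkeeping: one must check that the semi-honest model here (purified, abort-free executions) matches the model under which both \Cref{lem:nontrivial2pc} and \Cref{thm:sot-impossibility} are stated, that the specific input distributions fixed in \Cref{thm:sot-impossibility} are covered by the ``for all inputs'' statistical guarantee, and that the reduction does not silently leak information beyond the two $\Pi_f$-views. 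Given the black-box, view-concatenating nature of the reduction, this is routine, and the corollary then follows immediately.
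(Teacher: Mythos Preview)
Your proposal is correct and is exactly the argument the paper intends: the corollary has no separate proof in the paper beyond the single sentence ``Combining this with \Cref{thm:sot-impossibility}'' preceding its statement, so you have simply written out the bookkeeping (security preservation under the black-box reduction of \Cref{lem:nontrivial2pc}, then the numerical contradiction with $2P_B^* + P_A^* \ge 2$) that the paper leaves implicit.
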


Combining our equivalence theorem with \Cref{lem:trivial2pc,lem:nontrivial2pc}, we obtain the following dichotomy theorem.
We shall remark that this theorem, similar to the classical proof \cite{BeimelMM99}, is non-black-box in the use of the functionalities \cite{CrepeauK88,Kil91,KKMO00} due to the use of the equivalence theorem.
\begin{theorem}
  \label{thm:sh2pc2efi}
  If there is a semi-honest two-party secure computation protocol for a classical finite functionality $f(\cdot, \cdot)$, then either $f$ can be computed perfectly securely in a single message or EFI states exist.
\end{theorem}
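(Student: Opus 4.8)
The plan is to prove the statement by a clean case analysis on whether the finite functionality $f$ contains an insecure minor, assembling the machinery already built in the preceding sections. The key observation is that the dichotomy in the conclusion mirrors exactly the structural dichotomy of $f$ itself: every finite function either does or does not contain an insecure minor, and these two cases feed directly into the two alternatives of the theorem. So the entire proof is really a two-line branch, with the content hidden in the cited lemmas.

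First I would dispatch the easy case. Suppose $f$ does not contain an insecure minor. Then \Cref{lem:trivial2pc} (the classical characterization of \cite{BeimelMM99}) immediately yields a classical one-message perfectly secure computation protocol for $f$, which is precisely the first alternative. Note that here we do not even invoke the hypothesized semi-honest protocol: the existence of a perfectly secure protocol is an unconditional consequence of the combinatorial structure of $f$.

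Next I would treat the interesting case, where $f$ does contain an insecure minor. Here I take the hypothesized semi-honest secure computation protocol $\Pi_f$ for $f$ and feed it into \Cref{lem:nontrivial2pc}, which black-box constructs a semi-honest oblivious transfer protocol out of $\Pi_f$. Chaining this with the equivalence theorem (\Cref{thm:equivalence}, via the implication $4 \Rightarrow 1$ of \Cref{lem:sot2qg}), semi-honest OT yields a pair of EFI states, giving the second alternative and completing the dichotomy.

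The main obstacle, and the only part requiring genuine care rather than a citation, is to check that the several notions of semi-honest security line up along this chain. The reduction of \cite{BeimelMM99} was phrased for classical honest-but-curious adversaries, whereas our notion permits purification of all measurements and forbids aborts. I would verify (as sketched in the proof of \Cref{lem:nontrivial2pc}) that the OT produced inherits exactly the semi-honest security against purified Alice and Bob demanded in the hypothesis of \Cref{lem:sot2qg} — in particular that each party's view in the composed two-execution protocol remains semi-honest under purification, so that Alice's privacy in $\Pi_f$ transfers to receiver security of the OT and symmetrically for Bob. Everything else is routine assembly of the cited results, and since the whole chain is black-box in the next-message functions of the underlying protocols, the resulting reduction relativizes.
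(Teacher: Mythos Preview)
Your proposal is correct and matches the paper's approach exactly: case-split on whether $f$ has an insecure minor, invoke \Cref{lem:trivial2pc} in the trivial case, and in the non-trivial case chain \Cref{lem:nontrivial2pc} with the equivalence theorem (\Cref{thm:equivalence}) to obtain EFI pairs. The paper's proof is essentially the one-line remark ``combining our equivalence theorem with \Cref{lem:trivial2pc,lem:nontrivial2pc}'', and your discussion of the semi-honest compatibility check is already handled inside the proof of \Cref{lem:nontrivial2pc} rather than here.
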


\section{Quantum computational zero knowledge proofs}

The existence of one-way functions implies that  all of $\cPSPACE$ admit a computational zero knowledge proof \cite{Ben-OrGGHKMR88,Sha92}, and $\cNP$ admits computational zero-knowledge proofs where proofs can be efficiently generated given a witness for membership \cite{GMW86}. Furthermore, the existence of computational zero-knowledge proofs for {\em any} non-trivial (i.e., average-case easy) language implies the existence of (infinitely-often) one-way functions \cite{OW93}.
In this section, we use our equivalence theorem to establish the quantum analogue of these classical results.

\begin{definition}
  A language $L$ is in $\cQCZK$ if there is a (quantum) interactive protocol between an unbounded prover and a QPT verifier (specified by an interactive quantum Turing machine $V$) such that the following holds:
  \begin{itemize}
    \item Completeness: For any $x \in L$, there is an unbounded prover strategy $P$ that would make $V$ accept with probability at least $1 - 2^{|x|}$.
    \item Soundness: For any $x \not\in L$ and any unbounded prover strategy, $V$ accepts with probability at most $2^{|x|}$.
    \item Computational zero knowledge: For any malicious QPT verifier $V^*$, there exists a QPT simulator $S$ such that for any QPT distinguisher $D$ and non-uniform bipartite advice state $\rho_{AD}$, there exists a negligible function $\varepsilon$ such that for any $x \in L$,
      \[ \abs{\Pr\mbracket{D\mparen{\IP{P, V^*(\rho_A)}(x, x)_{V^*}, \rho_D} = 1} - \Pr\mbracket{D\mparen{S(x, \rho_A), \rho_D} = 1}} \le \varepsilon(|x|). \]
  \end{itemize}
\end{definition}
We can also consider a (much) weaker variant of this zero knowledge requirement called computationally zero knowledge against purified verifiers \emph{with abort} (or ``honest verifier''), where we restrict the malicious $V^*$ to only purifying his state and aborting after any fixed number of rounds\footnote{Formally, we require that for a $k$-round protocol, there exists a simulator $S$ whose output is computationally indistinguishable to $\rho_1 \otimes \cdots \otimes \rho_k$, where $\rho_i$ for $i \in [k]$ is the purified verifier's state immediately after receiving $i$-th message from the prover.}.
We call the corresponding class $\cQCZKHV$, similar to $\cQSZKHV$ with respect to $\cQSZK$ for statistical zero knowledge \cite{Wat02}.

Note that here, unlike semi-honest OT where we disallow a semi-honest party to prematurely abort, here we must allow a purified verifier to abort prematurely (this makes the complexity class smaller).
This is because otherwise we can even show the corresponding class (for even quantum \emph{perfect} zero knowledge) is trivially equal to $\cIP$, by simply asking the $\cIP$ verifier at the end to destroy all the other information he has learned by measuring them in Hadamard basis and then returning them to the prover.

\begin{fact}[{\ifitcsproc\cite{Sha92,Wat02,JJUW11,Kobayashi08}\else\cite{Sha92,Wat02,JJUW11} and \Cref{thm:qczkhv-qczk}\fi}]
  $\cBQP \subseteq \cQSZK = \cQSZKHV \subseteq \cQCZK = \cQCZKHV \subseteq \cQIP = \cIP = \cPSPACE$.
\end{fact}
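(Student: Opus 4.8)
The plan is to verify the chain one link at a time, observing that each link is either immediate from the definitions, a cited prior result, or the new equivalence $\cQCZK = \cQCZKHV$ established in \Cref{thm:qczkhv-qczk}. Starting from the left, for $\cBQP \subseteq \cQSZK$ (and likewise $\cBQP \subseteq \cQSZKHV$) I would use the trivial protocol in which the prover sends nothing and the verifier decides membership entirely on its own via the $\cBQP$ algorithm, amplified so that the error is exponentially small. Completeness and soundness are inherited from the decision procedure, and since the verifier's view is completely self-generated, the simulator that simply runs the verifier achieves perfect (hence statistical) zero knowledge, even against honest/purified verifiers with abort.

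Next, the equality $\cQSZK = \cQSZKHV$ is exactly the honest-verifier collapse theorem of Watrous~\cite{Wat02}, which I would invoke as a black box. The inclusion $\cQSZK = \cQSZKHV \subseteq \cQCZK$, and analogously $\subseteq \cQCZKHV$, is then immediate: statistical indistinguishability of the real and simulated views is a special case of computational indistinguishability, so any statistical zero-knowledge protocol is in particular computationally zero knowledge under both the malicious and the honest-verifier (purified, with abort) notions. The central equality $\cQCZK = \cQCZKHV$ is the new contribution, and here I would simply cite \Cref{thm:qczkhv-qczk} rather than reprove it.

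For the right end of the chain, $\cQCZK = \cQCZKHV \subseteq \cQIP$ holds by definition: discarding the zero-knowledge clause from a $\cQCZK$ protocol leaves an interactive protocol with an unbounded prover and a QPT verifier satisfying the same completeness and soundness guarantees, which is precisely a $\cQIP$ protocol. Finally, $\cQIP = \cPSPACE$ is the theorem of Jain, Ji, Upadhyay, and Watrous~\cite{JJUW11}, while $\cIP = \cPSPACE$ is Shamir's theorem~\cite{Sha92}; combining the two yields $\cQIP = \cIP = \cPSPACE$ and closes the chain. The only genuinely new ingredient entering this \textbf{Fact} is \Cref{thm:qczkhv-qczk}; every remaining step is either definitional (the trivial $\cBQP$ protocol, statistical-implies-computational, and dropping the ZK requirement) or a previously established equivalence. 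Consequently I expect the entire substance of the argument — the honest-verifier-to-malicious amplification in the \emph{computational} setting — to be isolated in that theorem, and I would flag it as the sole real obstacle, with the present \textbf{Fact} amounting to bookkeeping over known reductions.
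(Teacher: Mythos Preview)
Your proposal is correct and matches the paper's approach: the paper does not give an explicit proof of this Fact at all, treating it as a chain of known results assembled by citation (\cite{Sha92}, \cite{Wat02}, \cite{JJUW11}) together with the one new link \Cref{thm:qczkhv-qczk}. You have accurately unpacked exactly what each citation contributes and correctly identified that the only non-routine ingredient is the honest-verifier collapse for $\cQCZK$, which is deferred to \Cref{thm:qczkhv-qczk}.
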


\ifitcsproc\else
\subsection{With EFI, everything provable is provable in QCZK}
\fi
We first consider $\cQCZK$ protocols with efficient provers (in which case the largest complexity class we can consider is $\cQMA$), and then move on to $\cQCZK$ with inefficient provers.
The work by Broadbent and Grilo on $\cQCZK$ \cite{BroadbentG20} show how to build a commit-and-open zero knowledge protocol for $\cQCZK$ using a commitment scheme, and thus combining it with a quantum commitment scheme, we get a $\cQCZK$ proof, but it requires multiple copies of the advice to boost the soundness to negligible.
We  strengthen this to show that we can achieve the same thing from the same assumption, but with a single copy of the witness. 
\ifitcsproc
Overall, the following theorems are proven in  \cite{BCL22}:

\begin{theorem}
  \label{thm:efi2qczk-efficient}
  If EFI states exist, then  $\cQCZK = \cQIP$.  Furthermore, any language in  $\cQMA$  has a quantum  computational zero  knowledge proof with negligible soundness and with an efficient prover that uses only a single copy of the witness.
\end{theorem}
\else
\begin{theorem}
  \label{thm:efi2qczk-efficient}
  If EFI states exist, then  any language in  $\cQMA$  has a quantum  computational zero  knowledge proof with negligible soundness and with an efficient prover that uses only a single copy of the witness.
\end{theorem}
\begin{proof}
  By our equivalence theorem, we get maliciously-secure quantum 2PC for any \emph{quantum} functionality (any quantum channel) with one-sided statistical security via existing works \cite{DupuisNS12}.
  Formally, the protocol is simply the prover and the verifier engaging in the following secure two-party computation for the following quantum functionality that is statistically secure against the prover:
  \begin{itemize}
    \item Prover's input: The witness state.
    \item Verifier's input: None.
    \item The functionality computes whether the amplified verifier (so that the completeness-soundness gap is exponentially close to 1 \cite{marriott2005quantum}) accepts the witness. If so then output $1$ to the verifier, otherwise output $0$. The prover gets no output.
  \end{itemize}
  Completeness follows from the correctness of 2PC.
  Soundness follows from the statistical security of 2PC; in particular, we can extract the witness used by the prover and have the guarantee that this extracted witness passes verification with probability negligibly different from the success probability of the original malicious prover.
  For zero knowledge, we simply tell the 2PC simulator to simulate the malicious verifier's view given the ideal functionality outputting 1.
  Since the amplified verifier has completeness exponentially close to 1, the output of 2PC is unchanged with overwhelming probability, and thus invoking the security of 2PC, this change is computationally indistinguishable.
\end{proof}

\begin{theorem}
  \label{thm:qczk-qip}
  If EFI states exist, then $\cQCZK = \cQIP$.
\end{theorem}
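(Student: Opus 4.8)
The plan is to follow the classical Ben-Or et al.\ template, but to route around the quantum obstacle (that the committed states and the statement-to-be-proven are inherently quantum) by pushing the entire verification logic into a secure reactive functionality, exactly as foreshadowed in the techniques overview. The inclusion $\cQCZK \subseteq \cQIP$ is immediate since every $\cQCZK$ proof is in particular a quantum interactive proof. So the whole content is the reverse inclusion $\cQIP \subseteq \cQCZK$. First I would invoke $\cQIP = \cIP = \cPSPACE$ \cite{Sha92,JJUW11}, and recall that $\cQIP$ has complete problems admitting public-coin (Arthur--Merlin) interactive proofs; so it suffices to transform an arbitrary public-coin $\cIP$ protocol for a language $L$ into a $\cQCZK$ proof, using EFI pairs as the only cryptographic ingredient. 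By the equivalence theorem (\Cref{thm:equivalence}) together with the MPC constructions from EFI, I have available maliciously-secure two-party computation with one-sided statistical security for arbitrary (including reactive/stateful) classical functionalities \cite{CrepeauGT95,IshaiPS08,DupuisNS12}.

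The core construction is as follows. Let $(P_{\mathrm{orig}}, V_{\mathrm{orig}})$ be the underlying public-coin interactive proof for $L$, with negligible completeness and soundness error. The new $\cQCZK$ protocol has the (unbounded) prover play the role of $P_{\mathrm{orig}}$, while the new verifier and the prover jointly run a secure evaluation of a reactive functionality $\mathcal{F}$ that internally simulates $V_{\mathrm{orig}}$. The new verifier feeds its private random challenge coins into $\mathcal{F}$; $\mathcal{F}$ uses those coins to play the verifier side of $(P_{\mathrm{orig}}, V_{\mathrm{orig}})$ against the external prover, forwarding prover messages in and verifier challenges out, maintaining $V_{\mathrm{orig}}$'s state across rounds. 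At the end $\mathcal{F}$ computes the accept/reject decision of $V_{\mathrm{orig}}$ and delivers only that single acceptance bit to the external verifier, with the prover getting no output. Crucially, because the challenge coins stay inside $\mathcal{F}$ and only the final bit is released, the verifier never has to expose or commit to any quantum state, and no intermediate information about the transcript leaks. I would require $\mathcal{F}$'s security to be statistical against a malicious prover (so soundness is unconditional) and computational against a malicious verifier (the source of the zero-knowledge guarantee).

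The three properties then reduce cleanly to correctness and the two-sided security of the 2PC. \textbf{Completeness} follows from correctness of the 2PC together with completeness of $(P_{\mathrm{orig}}, V_{\mathrm{orig}})$: an honest prover running $P_{\mathrm{orig}}$ makes $\mathcal{F}$'s internal $V_{\mathrm{orig}}$ accept with overwhelming probability. \textbf{Soundness} follows from statistical security against the prover: the simulator extracts an effective prover strategy against the ideal $\mathcal{F}$, and since $\mathcal{F}$ faithfully runs $V_{\mathrm{orig}}$ with freshly sampled internal coins that no prover strategy can see, the soundness of the original proof bounds the acceptance probability for $x \notin L$. \textbf{Zero knowledge} follows from computational security against the verifier: for any malicious QPT $V^*$, I hand $V^*$ to the 2PC simulator for Bob, whose only interface to the ideal functionality is supplying challenge coins and receiving the one-bit output; for $x \in L$ that output is $1$ with overwhelming probability, so the simulator produces a view computationally indistinguishable from the real interaction without any prover witness. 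The main obstacle, and the step deserving the most care, is establishing that secure evaluation of a \emph{reactive, round-by-round} classical functionality with the required two-sided (one statistical, one computational) security is genuinely obtainable from EFI in the quantum setting; I would discharge this by citing the reactive-functionality compilers \cite{CrepeauGT95,IshaiPS08} layered on top of the EFI-based OT and statistically-one-sided 2PC, and verifying that feeding an \emph{external} unbounded prover through $\mathcal{F}$ (rather than treating the prover as an internal party) is compatible with the simulation-based security definitions used above.
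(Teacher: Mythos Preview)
Your proposal is correct and follows essentially the same approach as the paper: both route the public-coin $\cIP$ protocol through a reactive 2PC functionality (built from EFI via the equivalence theorem and \cite{CrepeauGT95,IshaiPS08}) that internally plays $V_{\mathrm{orig}}$, takes the verifier's coins as input, interacts with the external prover, and releases only the final acceptance bit, with statistical security against the prover and computational security against the verifier. The only minor difference is that the paper explicitly takes $\Pi$ to have perfect completeness, which makes the zero-knowledge simulation argument slightly cleaner (the ideal output is \emph{always} $1$ regardless of the verifier's coins), whereas you work with negligible completeness error; both are fine.
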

\begin{proof}
  By our equivalence theorem, we get a maliciously-secure quantum 2PC for any (classical) \emph{reactive} functionality with one-sided statistical security via existing works \cite{CrepeauGT95,IshaiPS08}.
  By reactive, we mean that the functionality can interact with the two parties and keep private states.
  Given any language $L \in \cQIP$, let $\Pi$ be a (classical many-round) Merlin--Arthur interactive proof protocol for $L$ with completeness 1.
  The zero knowledge protocol is simply the prover and the verifier engaging in the following secure two-party computation for the following reactive functionality that is statistically secure against the prover:
  \begin{itemize}
    \item $\Pi$ is executed in the following way: for every prover message, the functionality remembers the message and only sends $\top$ to the verifier; and for every verifier message, the functionality forwards it to the prover.
      (Both parties are able to participate in the protocol as usual since the prover sees all the messages and the verifier only needs to flip random coins.)
    \item At the end, if the verifier for $\Pi$ accepts the transcript, we output $1$ to the verifier; otherwise we output $0$. The prover gets no output.
  \end{itemize}
  Similarly as before, completeness follows from completeness of $\Pi$ and the correctness of 2PC.
  Soundness follows from the soundness of the original $\cIP$ protocol and the statistical security of 2PC; in particular, we can use the 2PC simulator to come up with a malicious prover for $\Pi$ and have the guarantee that this malicious prover has success probability negligibly different from the success probability of the original malicious prover.
  For zero knowledge, we simply invoke the 2PC simulator for the verifier, and tell the simulator that the ideal functionality outputs 1.
  Since the protocol has completeness 1, the output of 2PC is unchanged no matter what the verifier's input is, and thus invoking the security of 2PC, this change is computationally indistinguishable.
\end{proof}

\subsection{EFI pairs are essential for non-trivial QCZK}
\fi
\ifitcsproc
\begin{theorem}
  \label{thm:hardness2qci}
  If there is a language in $\cQCZKHV$ that is hard on average for $\cBQP$ for some $\cBQP$-samplable distribution $D$, i.e.\ any $\cBQP$ algorithm has negligible success probability in deciding $L$ on average over $D$ and the algorithm's randomness,  then EFI state pairs (secure against uniform $\cBQP$ adversaries) exist.
\end{theorem}
\else
\begin{theorem}
  \label{thm:hardness2qci}
  If there is a language in $\cQCZKHV$ that is hard on average for $\cBQP$ for some $\cBQP$-samplable distribution $D$, i.e.\ any $\cBQP$ algorithm has negligible success probability in deciding $L$ on average over $D$ and the algorithm's randomness,  then EFI state pairs (secure against uniform $\cBQP$ adversaries) exist.
  \end{theorem}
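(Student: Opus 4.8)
The plan is to follow the instance-dependent route sketched in \Cref{sec:tech-overview}: attach to every instance $x$ a pair of efficiently preparable mixed states whose statistical-versus-computational behaviour encodes membership, and then average these over the hard distribution $D$ to obtain a genuine EFI pair. Since the hypothesis already provides an honest-verifier protocol, I would work directly with a $\cQCZKHV$ proof $(P,V)$ for $L$ together with its honest-verifier simulator $S$ (one may alternatively first invoke the unconditional $\cQCZK=\cQCZKHV$ of \Cref{thm:qczkhv-qczk}).

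First I would construct the instance-dependent states. Following the mixed-state technique Watrous uses for $\cQSZK$~\cite{Wat02}, I would read off from $S(x)$ the simulated honest-verifier views at each round and assemble, from the views of two consecutive rounds, a pair of efficiently generatable states $\xi_{0,x},\xi_{1,x}$, designed so that the two states are ``consistent'' exactly when the simulator faithfully reproduces a genuine accepting transcript. I would then establish two properties, choosing the orientation in which YES instances give closeness and NO instances give farness (the reverse convention being symmetric). For $x\in L$: because the honest verifier's consecutive views are related by a fixed, publicly known efficient channel, a real accepting execution makes the two constructions statistically agree, and pushing this through computational zero knowledge by a hybrid argument — replacing the real view by the simulator output — yields $\xi_{0,x}\approx_c\xi_{1,x}$. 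For $x\notin L$: if $\xi_{0,x}$ and $\xi_{1,x}$ were statistically close, Uhlmann's theorem would let an unbounded prover stitch the simulated views into a strategy making $V$ accept, contradicting statistical soundness, so $\TraceDist(\xi_{0,x},\xi_{1,x})$ is noticeable; amplifying the completeness--soundness gap first (as in \Cref{cor:td-amplification}) makes it a constant. It is essential that soundness holds against \emph{unbounded} provers, which is why we need a proof rather than an argument.

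Next I would carry out the averaging. Define the EFI candidate by sampling $x\sim D$ (possible since $D$ is $\cBQP$-samplable) and outputting the classical label $x$ together with $\xi_{b,x}$ for $b\in\{0,1\}$; efficient generation is then immediate. For statistical farness, $\TraceDist(\xi_0,\xi_1)=\E_{x\sim D}\mbracket{\TraceDist(\xi_{0,x},\xi_{1,x})}\ge \Pr_{x\sim D}[x\notin L]\cdot\Omega(1)$, and average-case hardness forces $\Pr[x\notin L]$ to be noticeable (otherwise the trivial ``always accept'' decider would succeed), making the distance inverse-polynomial. For computational indistinguishability I would argue the contrapositive: a QPT distinguisher $D'$ with noticeable advantage lets one estimate, for a given $x$, the per-instance advantage $a_x=\Pr[D'(x,\xi_{0,x})=1]-\Pr[D'(x,\xi_{1,x})=1]$ by repeatedly regenerating the efficiently preparable states; since $a_x$ is negligible on $x\in L$ but must on average be noticeable on $x\notin L$, thresholding $|a_x|$ recognizes NO instances and thereby decides $L$ over $D$ with advantage beyond any trivial predictor, contradicting the assumed hardness.

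I expect the main obstacle to be the construction and analysis of the instance-dependent states — faithfully adapting Watrous's consistency argument to the interactive quantum setting and, crucially, downgrading ``closeness on YES instances'' from statistical (as in $\cQSZK$) to \emph{computational}, driven only by the zero-knowledge simulator, while keeping ``farness on NO instances'' statistical via soundness. The averaging step is comparatively routine, modulo pinning down the precise average-case hardness notion so that the distinguisher-to-decider reduction contradicts it cleanly; one should also observe that this yields EFI secure against uniform $\cBQP$ and, as in \Cref{thm:aiefi}, relates to the auxiliary-input variant.
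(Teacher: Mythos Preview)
Your proposal is correct and follows essentially the same route as the paper: it too constructs instance-dependent mixed states via Watrous's consecutive-view technique (this is exactly the paper's \Cref{lem:qczk2idstates}, with the same hybrid for YES instances and the same soundness/Uhlmann argument for NO instances), then averages over the hard distribution and turns a hypothetical distinguisher into a decider for $L$. The only cosmetic difference is in the last step, where the paper uses a single-shot ``predict $b$'' test rather than your estimate-and-threshold on $a_x$; both are standard and yield the same contradiction.
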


As is the classical case \cite{OW93}, the other direction is more involved.
Since we could not use one-way functions like the classical proof, we instead use ideas from the works of Ong and Vadhan~\cite{Vadhan06,OngV08}.
On a high level, they also study the equivalence between zero-knowledge and commitments in the classical setting, but their proof goes through universal one-way hash functions.
We adapt their high-level proof ideas but replace certain ingredients with quantum techniques originally developed by Watrous when studying quantum statistical zero knowledge proofs \cite{Wat02}.
In particular, we are going to take their idea of considering instance-dependent probability ensembles/commitments, and propose the quantum analogue of instance-dependent probability ensembles in the following lemma, showing the equivalence between a language being inside $\cQCZK$ and the existence of an instance-dependent computational indistinguishability for that language\footnote{We only formally show one direction, but the other direction follows the same proof as the classical case via constructing zero-knowledge proofs from instance-dependent commitments.}.

\begin{lemma}
  \label{lem:qczk2idstates}
  If a language $L$ admits a $\cQCZKHV$ proof, then there are instance-dependent mixed states $\{\xi_{b, x}\}_{b, x}$ for $L$ where $b = 0, 1$ and $x \in \{0, 1\}^*$ such that:
  \begin{itemize}
    \item There is a uniform QPT procedure that on input $b, x$ generates $\xi_{b, x}$.
    \item For every nonuniform QPT distinguisher $D$, there is a negligible function $\varepsilon$ such that for all $x \in L$,
    \[\abs{\Pr\mbracket{D(x, \xi_{0, x}) = 1} - \Pr\mbracket{D(x, \xi_{1, x}) = 1}} \le \varepsilon(|x|).\]
    \item There is some constant $c$ such that for every $x \not\in L$, $\TraceDist(\xi_{0, x}, \xi_{1, x}) \ge c$.
  \end{itemize}
\end{lemma}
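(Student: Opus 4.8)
The plan is to adapt Watrous's mixed-state construction for $\cQSZK$~\cite{Wat02} to the computational, instance-dependent setting. Start from a $\cQCZKHV$ proof for $L$, which we may take to have exponentially small completeness and soundness error (by definition, or by standard amplification that preserves the product-snapshot form of honest-verifier simulation). By the definition of $\cQCZKHV$, there is a uniform QPT honest-verifier simulator $S$ that on input $x$ outputs a state computationally indistinguishable from $\rho_1 \otimes \cdots \otimes \rho_k$, where $\rho_i$ is the purified verifier's view immediately after receiving the $i$-th prover message. The idea is to build from these purported snapshots a pair of efficiently preparable states $\xi_{0,x},\xi_{1,x}$ that \emph{test the consistency} of the snapshots against the verifier's known local operations, so that (approximate) consistency holds if and only if the snapshots describe a genuine accepting interaction.

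Concretely, let $U_j$ be the fixed efficient unitary the honest verifier applies between receiving its $j$-th and $(j+1)$-th messages; fix $\rho_0$ to the verifier's known starting state and treat the final verifier measurement as one additional transition. On input $(b,x)$, run $S(x)$ to obtain snapshots $\tilde\rho_1 \otimes \cdots \otimes \tilde\rho_k$ and sample a uniform transition index $j$; then $\xi_{0,x}$ outputs $\ketbra{j}{j}$ together with the verifier's private register (message register traced out) of $U_j\tilde\rho_j U_j^\dagger$, while $\xi_{1,x}$ outputs $\ketbra{j}{j}$ together with the verifier's private register of $\tilde\rho_{j+1}$. Since $S$ is QPT and the $U_j$ are efficient, both are uniformly QPT-generatable, giving the first property. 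By construction $\TraceDist(\xi_{0,x},\xi_{1,x})$ is the average over $j$ of the per-transition mismatch, hence small exactly when the snapshots are approximately consistent and large when some transition is badly inconsistent.

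For $x \in L$ (computational closeness) I would compare to the \emph{real-view} versions $\hat\xi_{b,x}$, defined by the same efficient construction applied to the verifier's view in the real interaction with the honest prover. Because that view is an actual execution, consecutive snapshots are exactly related by the $U_j$, and completeness makes the final transition accepting, so $\hat\xi_{0,x}$ and $\hat\xi_{1,x}$ are statistically close. Computational zero knowledge states that $S(x)$'s output is computationally indistinguishable from the real view; since $\xi_{b,x}$ is an efficient function of the simulator output and $\hat\xi_{b,x}$ the same function of the real view, computational indistinguishability is preserved, giving $\xi_{b,x} \approx_c \hat\xi_{b,x}$ for each $b$. Chaining $\xi_{0,x} \approx_c \hat\xi_{0,x} \approx_s \hat\xi_{1,x} \approx_c \xi_{1,x}$ yields the second property. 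Note that the honest prover may be unbounded, so $\hat\xi_{b,x}$ need not be efficiently samplable; but we only use its existence and the indistinguishability guarantee, never sampling it.

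For $x \notin L$ (statistical farness) zero knowledge is unavailable, so I rely unconditionally on soundness, exactly as in Watrous's $\cQSZK$ analysis. The key claim is that if the snapshots were both approximately consistent, so that $\TraceDist(\xi_{0,x},\xi_{1,x})$ is small, and also led to acceptance, then by Uhlmann's theorem one could construct, transition by transition, a (possibly unbounded) prover that reproduces the snapshots and makes the honest verifier accept with probability noticeably above the soundness bound --- a contradiction. Calibrating the exponentially small soundness error against the accumulated per-transition error then forces $\TraceDist(\xi_{0,x},\xi_{1,x}) \ge c$ for some constant $c$, giving the third property. I expect this to be the main obstacle: it requires the round-by-round Uhlmann-based prover reconstruction with careful control of how the per-transition mismatches accumulate across the $k$ rounds, so that small \emph{average} inconsistency genuinely contradicts amplified soundness. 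The asymmetry of the whole argument --- using soundness unconditionally to obtain statistical farness for $x \notin L$, but only computational zero knowledge to obtain computational closeness for $x \in L$ --- is precisely what makes the construction yield an instance-dependent object rather than a fully statistical one.
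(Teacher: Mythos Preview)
Your proposal is correct and takes essentially the same approach as the paper: both directly adapt Watrous's $\cQSZK$ consistency-test construction \cite[Theorem 7]{Wat02}, building the two states from simulator-produced snapshots compared against verifier-advanced snapshots, invoking the Uhlmann-based cheating-prover reconstruction \cite[Lemma 8]{Wat02} for statistical farness on NO instances, and chaining through the real interaction (your $\hat\xi_{b,x}$, the paper's $\gamma_2$) for computational closeness on YES instances. The only cosmetic difference is that the paper tensors all $k$ rounds together while you sample a uniform round index and explicitly handle the initial/final boundary transitions; neither change affects the argument.
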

\begin{proof}
  The mixed states that we consider is due to Watrous~\cite[Theorem 7]{Wat02}, who introduced these states while finding a complete problem for $\cQSZK$.
  We are given a $k$-round computational zero knowledge protocol, where each round consists of the verifier sending a message followed by the prover sending a response.
  In Watrous's work, $\gamma_0$ and $\gamma_1$ are defined to be $\gamma_0 = \rho_1 \otimes \cdots \otimes \rho_k$ and $\gamma_1 = \xi_1 \otimes \cdots \otimes \xi_k$, where $\rho_i$ corresponds to the simulated purified verifier's state if he aborts at the end of round $i$ and then trace out the prover's message, and $\xi_i$ corresponds to the simulated purified verifier's state if he aborts immediately before sending the $i$-th round's message and then trace out the verifier's message (more formally, this is taken to be the simulated output after $(i - 1)$-th message from the prover, apply the verifier's action in $i$-th round and then trace out the message register).
  Efficient generation is immediate and trace distance for NO instances follows from the soundness of the protocol as in the case for $\cQSZK$ \cite[Lemma 8]{Wat02}.
  
  Finally, computational indistinguishability for YES instances follows from a straightforward hybrid argument, invoking the computational zero knowledge property.
  This is because in the honest execution (for a YES instance), the verifier's (unsimulated) states corresponding to $\rho_i$ and $\gamma_i$ are identical since they only differ by an action from the prover, which does not act on the verifier's private register that we are not outputting.
  More formally, let $\gamma_2 = \psi_1 \otimes \cdots \otimes \psi_k$, where $\psi_i$ is the purified verifier's state after $i$ rounds of interaction with the real prover.
  By semi-honest zero knowledge, we immediately have that $\gamma_0$ is computationally indistinguishable to $\gamma_2$.
  Invoking semi-honest zero knowledge again, we have that $\gamma_1$ is computationally indistinguishable to a state that is identical to $\gamma_2$, by our discussion above.
  This completes the proof via hybrid argument.
\end{proof}

We first sketch how this object is powerful enough for us to give the following alternative proof for an unconditional result about $\cQCZKHV$ originally established by Kobayashi~\cite{Kobayashi08}.

\begin{theorem}[{\cite{Kobayashi08}}]
  \label{thm:qczkhv-qczk}
  $\cQCZKHV = \cQCZK$ holds unconditionally.
\end{theorem}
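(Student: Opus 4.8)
The inclusion $\cQCZK \subseteq \cQCZKHV$ is immediate, since full zero knowledge against arbitrary malicious verifiers in particular implies zero knowledge against purified verifiers with abort; hence any $\cQCZK$ protocol is already a $\cQCZKHV$ protocol. The plan is therefore to establish the reverse inclusion $\cQCZKHV \subseteq \cQCZK$ by upgrading an honest-verifier protocol into one that is zero knowledge against all malicious verifiers, using the instance-dependent object from \Cref{lem:qczk2idstates} as the engine.

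First I would fix a language $L \in \cQCZKHV$ and apply \Cref{lem:qczk2idstates} to obtain instance-dependent mixed states $\{\xi_{b,x}\}$ that are computationally indistinguishable when $x \in L$ and satisfy $\TraceDist(\xi_{0,x}, \xi_{1,x}) \ge c$ when $x \notin L$. Next I would observe that the construction in \Cref{lem:qg2commitments} extends to the instance-dependent setting: viewing the purified generation of $\xi_{0,x}, \xi_{1,x}$ as a canonical commitment (with the output register as the commitment register), the resulting instance-dependent commitment is computationally hiding whenever $x \in L$ (from computational indistinguishability of the $\xi_{b,x}$) and statistically binding whenever $x \notin L$ (from the constant trace-distance guarantee, amplified via \Cref{cor:td-amplification} with $O(|x|)$ copies to push the distance to $1 - \mathrm{negl}$ so that the fidelity bound of \Cref{lem:qg2commitments} applies, with hiding for $x \in L$ preserved by the usual hybrid argument over the copies).

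With this instance-dependent commitment in hand, I would reuse the machinery of \Cref{thm:qczk-qip}. Since $L \in \cQCZKHV \subseteq \cQIP = \cPSPACE$, I fix a classical public-coin interactive proof $\Pi$ for $L$ with perfect completeness, and have the prover and verifier engage in the secure evaluation of the reactive functionality that runs $\Pi$, now instantiating the underlying commitment (and hence the one-sided statistically secure 2PC built from it) with the instance-dependent commitment above. Completeness follows from correctness of the 2PC and completeness of $\Pi$; soundness for $x \notin L$ follows from statistical binding together with soundness of $\Pi$ (the 2PC is statistically secure against the prover, so a cheating prover yields a cheating $\Pi$-prover); and crucially, full zero knowledge for $x \in L$ follows from computational hiding via the 2PC simulator, which by \emph{malicious} simulation security handles an \emph{arbitrary} malicious verifier $V^*$ — this is precisely the step that upgrades honest-verifier zero knowledge to zero knowledge against all verifiers.

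The hard part will be the bookkeeping that makes this instance-dependent splicing sound: the commitment is hiding exactly on YES instances and binding exactly on NO instances, and is never guaranteed to be both at a single $x$, so each property of the final protocol must draw on only the half of security that is actually available. Concretely, I must verify that soundness is provable from the binding property alone (which holds for every $x \notin L$, precisely where soundness is required) and full zero knowledge from the hiding property alone (which holds for every $x \in L$, precisely where zero knowledge is required). This amounts to checking that the commitment-to-OT-to-2PC transformations route statistical security against the prover through statistical binding and computational security against the verifier through computational hiding, so the two requirements never simultaneously demand binding and hiding of the same instance. Once this separation is confirmed, the compiled protocol is simultaneously sound and fully zero knowledge, establishing $\cQCZKHV \subseteq \cQCZK$ and hence the theorem.
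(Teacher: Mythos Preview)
Your proposal is correct and follows essentially the same approach as the paper's proof sketch: obtain instance-dependent mixed states from \Cref{lem:qczk2idstates}, upgrade them to instance-dependent commitments via \Cref{lem:qg2commitments}, feed these through the known compilers to get instance-dependent 2PC for reactive functionalities with one-sided statistical security, and plug into the construction of \Cref{thm:qczk-qip}. Your discussion of the ``bookkeeping'' (that soundness draws only on binding for NO instances while zero knowledge draws only on hiding for YES instances) makes explicit exactly what the paper leaves implicit when it says the instance-dependent 2PC is ``statistically secure against the prover when $x$ is not in the language and computationally secure against the verifier when $x$ is in the language.''
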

\begin{proof}[Proof sketch]
  By definition, $\cQCZK \subseteq \cQCZKHV$.
  Using the lemma above with \Cref{lem:qg2commitments}, for any language $L \in \cQCZKHV$, we obtain instance-dependent quantum commitments, which is statistically binding when $x$ is not in the language and computationally hiding when $x$ is in the language.
  This is sufficient to obtain an instance-dependent 2PC for reactive functionalities via known compilers \cite{BartusekCKM21a,AQY21,WolfW06,CrepeauGT95,IshaiPS08}, where it is statistically secure against the prover when $x$ is not in the language and computationally secure against the verifier when $x$ is in the language.
  This can be then plugged into \Cref{thm:qczk-qip} to give a $\cQCZK$ proof for $L$.
\end{proof}

\begin{proof}[Proof of \Cref{thm:hardness2qci}]
  Assume that there exists $L$ that admits a $\cQCZK$ proof but $(L, D)$ is hard on average against $\cBQP$ for some efficiently samplable distribution $D$.
  By \Cref{lem:qczk2idstates}, we get instance-dependent mixed states.
  
  Since $D$ is hard-on-average, it is easy to see that YES and NO instances are of at least 1/3 fraction for all sufficiently large instance lengths, as otherwise a trivial machine that outputs a constant decides this language with a noticeable advantage (infinitely often).
  Given that this is the case, it is easy to see that the mixed states $(x, \xi_{b, x})$ taken average over this distribution for $x$ is still statistically far.
  
  We now assume for contradiction that this EFI state pair does not satisfy computational indistinguishability, i.e.\ there is a $\cBQP$ algorithm $M$ that distinguishes $(x, \xi_{0, x})$ from $(x, \xi_{1, x})$.
  In particular, without loss of generality that it outputs $1$ with probability $c(|x|)$ when $b = 1$ but $s(|x|)$ when $b = 0$ over $|x|$, where $c - s$ is noticeable.
  
  We now give a QPT algorithm for $L$: on input $x$, generate the instance-dependent mixed states with $b$ uniformly chosen from random; if $M$ correctly predicts $b$ then reject, otherwise accept.
  When the input is a YES instance, we accept correctly with probability within $\bracket{\frac12 - \varepsilon(|x|), \frac12 + \varepsilon(|x|)}$ for some universal negligible function $\varepsilon$ as otherwise we can use this algorithm to break computational indistinguishability of the instance-dependent mixed states.
  Since the algorithm predicts correctly with non-negligible probability over the entire domain of $D$, it must be the case that almost all the prediction advantages are from NO instances\footnote{We cannot argue this fact directly from the non-existence of EFI states since the distribution conditioning on the input being a NO instance might not be efficiently samplable.}.
  Therefore, for NO instances, this algorithm correctly rejects with probability noticeably more than $\frac12$.
  This breaks the average-case hardness of $(L, D)$, a contradiction.
\end{proof}

\begin{theorem}
  \label{thm:aiefi}
  If $\cBQP \neq \cQCZK$ then auxiliary-input EFI state pairs exist, where the definition of auxiliary-input EFI state pairs is the following:
  \begin{enumerate}
    \item \textbf{Efficient generation}: There exists a uniform QPT quantum algorithm $A$ that on input $(x, b)$ outputs the mixed state $\xi_{b, x}$.
    \item \textbf{Statistically Far}: There exists an infinite set $S \subseteq \{0, 1\}^*$ and some polynomial $p$, such that for any $x \in S$, we have that $\TraceDist\mparen{\xi_{0, x}, \xi_{1, x}} \ge 1/p(|x|)$.
    \item \textbf{Computational Indistinguishability}: For any uniform QPT distinguisher $D$ and every polynomial $q$, there exists $x \in S$ such that $\abs{\Pr[D(x, \xi_{0, x}) = 1] - \Pr[D(x, \xi_{1. x}) = 1]} \le 1/q(|x|)$.
  \end{enumerate}
\end{theorem}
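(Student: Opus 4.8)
The plan is to start from the hypothesis $\cBQP \neq \cQCZK$ to extract a language $L \in \cQCZK \setminus \cBQP$; since $\cQCZK = \cQCZKHV$ by \Cref{thm:qczkhv-qczk}, we have $L \in \cQCZKHV$ and can invoke \Cref{lem:qczk2idstates} to obtain instance-dependent mixed states $\{\xi_{b,x}\}_{b,x}$ that are computationally indistinguishable for $x \in L$ and statistically far (by some constant $c$) for $x \notin L$. First I would observe that both the YES set $L$ and the NO set $\overline L := \{0,1\}^* \setminus L$ are infinite: if either were finite, $L$ could be decided by a trivial $\cBQP$ machine with the finite exceptional set hard-coded, contradicting $L \notin \cBQP$. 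I would then take the auxiliary-input witness set to be $S := \overline L$. The efficient generation requirement is immediate from the lemma, and statistical farness on all of $S$ holds since $\TraceDist(\xi_{0,x}, \xi_{1,x}) \ge c$ for every $x \notin L$, so any polynomial $p$ with $1/p \le c$ works.

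The crux is establishing the (infinitely-often) computational indistinguishability clause on $S$. I would argue by contradiction: its negation says there is a single uniform QPT distinguisher $D$ and a polynomial $q$ such that $\abs{\Pr[D(x,\xi_{0,x})=1] - \Pr[D(x,\xi_{1,x})=1]} > 1/q(|x|)$ for every $x \in S = \overline L$. Crucially, the same $D$ must have only negligible advantage on every $x \in L$, because \Cref{lem:qczk2idstates} guarantees computational indistinguishability there against all QPT distinguishers. This yields a clean dichotomy on the distinguishing advantage $a(x) := \abs{\Pr[D(x,\xi_{0,x})=1] - \Pr[D(x,\xi_{1,x})=1]}$: it is noticeable (exceeding $1/q$) exactly on NO instances and negligible on YES instances. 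I would then build a uniform $\cBQP$ decider for $L$ that, on input $x$, efficiently prepares many fresh copies of $\xi_{0,x}$ and $\xi_{1,x}$, runs $D$ on each to estimate $a(x)$ to additive accuracy $1/(3q(|x|))$ with overwhelming confidence (Hoeffding over polynomially many samples), and outputs NO if the estimate exceeds $1/(2q(|x|))$ and YES otherwise. For all sufficiently large $x$ this decides $L$ correctly with high probability, contradicting $L \notin \cBQP$. Hence the indistinguishability clause holds and the states $\xi_{b,x}$ on $S$ form an auxiliary-input EFI pair.

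I expect the main obstacle to be conceptual rather than computational: \Cref{lem:qczk2idstates} delivers statistical farness only on NO instances and computational indistinguishability only on YES instances, so at first glance the two EFI requirements live on disjoint instance sets. The resolution is to notice that the auxiliary-input notion only demands indistinguishability infinitely often, and that worst-case hardness of $L$ forces indistinguishability precisely on the NO instances where farness already holds --- for otherwise the efficient distinguisher that separates NO instances would, together with the guaranteed indistinguishability on YES instances, decide $L$ in $\cBQP$. The only technical care needed is to confirm that estimating the advantage of a fixed uniform $D$ via repeated preparation-and-measurement is a genuinely uniform $\cBQP$ procedure (no Helstrom measurement or nonuniform advice is invoked), so that the contradiction is with the uniform separation $L \notin \cBQP$.
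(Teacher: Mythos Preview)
Your proposal is correct and follows essentially the same approach as the paper: set $S = \overline{L}$ for some $L \in \cQCZK \setminus \cBQP$, invoke \Cref{lem:qczk2idstates}, and turn a hypothetical uniform distinguisher succeeding on all of $S$ into a $\cBQP$ decider for $L$ by estimating its advantage and thresholding. (Minor note: $\cQCZK \subseteq \cQCZKHV$ holds by definition, so the detour through \Cref{thm:qczkhv-qczk} is unnecessary, and only the infinitude of $\overline{L}$ is actually needed.)
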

\begin{proof}
  The proof works similarly as before.
  Given a $\cQCZK$ language $L$, we start by constructing instance-dependent mixed states $\xi_{b, x}$.
  Consider $S = \bar L$.
  If $S$ is finite then $L$ is trivial then we are done.
  Otherwise, by the non-existence of auxiliary-input EFI states, we get that there exists a uniform $\cBQP$ distinguisher $D$ such that for every $x \in S$, the $\cBQP$ distinguisher distinguishes these two distributions with inverse polynomial probability.
  Using the same $\cBQP$ decider as the proof above, we get that this decider (after parallel repetition and majority vote) is correct with probability at least $2/3$ except on finitely many instances.
  Therefore, $L \in \cBQP$.
\end{proof}
\fi
\ifitcsproc\else
\section*{Acknowledgements}
\acks
\fi
\fi
\fi

\iflipics
  \bibliography{bib}
\else
  \printbibliography
\fi

\end{document}